\documentclass[11pt]{article}

\usepackage{amsmath}
\usepackage{amssymb}
\usepackage{float}
\usepackage{booktabs}
\usepackage{makecell}
\usepackage{bigints}
\usepackage{bbm}
\usepackage{caption}
\usepackage{multirow}
\usepackage{amsthm}
\usepackage{array}
\usepackage{bbm}
\usepackage{eucal}[mathscr]
\usepackage{textcomp}
\usepackage{tikz}
\usepackage{tkz-euclide}
\usetikzlibrary{angles,quotes}
\usepackage{graphicx}
\usepackage{hyperref}
\usepackage{pgfplots}
\pgfplotsset{compat=newest}
\usepackage{mathrsfs}
\usepackage[english]{babel}
\usepackage[autostyle, english = american]{csquotes}
\MakeOuterQuote{"}
\newcommand{\indep}{\perp \!\!\! \perp}
\usepgfplotslibrary{patchplots}

\DeclareMathOperator*{\argmin}{arg\,min}

\newtheorem{theorem}{Theorem}
\newtheorem{assumption}{Assumption}
\newtheorem{lemma}{Lemma}
\usepackage[backend=biber,
style=numeric,
citestyle=authoryear]{biblatex}

\title{Global Testing in Multivariate Regression Discontinuity Designs\footnote{I thank EunYi Chung, Marcelo Medeiros, Joshua Shea, and all of the participants of the UIUC Econometrics Lab Lunch Seminar and Applied Micro Research Lunch for thoughtful comments and suggestions.}}
\author{Artem Samiahulin\footnote{Department of Economics, University of Illinois at Urbana-Champaign}}
\date{\today}

\begin{document}
\maketitle

\begin{abstract}
Regression discontinuity (RD) designs with multiple running variables arise in a growing number of empirical applications, including geographic boundaries and multi-score assignment rules. Although recent methodological work has extended estimation and inference tools to multivariate settings, far less attention has been devoted to developing global testing methods that formally assess whether a discontinuity exists anywhere along a multivariate treatment boundary. Existing approaches perform well in large samples, but can exhibit severe size distortions in moderate or small samples due to the sparsity of observations near any particular boundary point. This paper introduces a complementary global testing procedure that mitigates the small-sample weaknesses of existing multivariate RD methods by integrating multivariate machine learning estimators with a distance-based aggregation strategy, yielding a test statistic that remains reliable with limited data. Simulations demonstrate that the proposed method maintains near-nominal size and strong power, including in settings where standard multivariate estimators break down. The procedure is applied to an empirical setting to demonstrate its implementation and to illustrate how it can complement existing multivariate RD estimators.
\end{abstract}

\pagebreak

\section{Introduction}

Regression discontinuity (RD) designs are widely regarded as one of the "...leading quasi-experimental empirical strategies in economics, political science, education, and many other social and behavioral sciences" (\textcite{calonico2014robust}). This paper develops a global testing framework for multivariate regression discontinuity designs, allowing researchers to determine whether discontinuities in conditional expectations or densities exist anywhere along a multivariate treatment boundary, even in settings with limited sample sizes. From a theoretical perspective, the paper establishes a global testing procedure whose validity does not rely heavily on precise estimation of local multivariate discontinuities. Practically, the proposed procedure exhibits good finite-sample size control even when the sample size is modest. \\

While traditional RD designs use a single running variable, recent research has increasingly explored settings with multiple running variables (\textcite{papay2011extending, reardon2012regression, wong2013analyzing}). For example, \textcite{keele2015geographic} examine geographic RD designs, where treatment and control groups are separated by geographic boundaries in two-dimensional space. \textcite{matsudaira2008mandatory} studies the effects of mandatory summer school, where treatment assignment depends on both math and reading scores. \textcite{londono2020upstream} explore a government tuition subsidy in Colombia where eligibility was determined by both merit (via a test score) and economic need (via an index of wealth). For more applications, see papers such as \textcite{frey2019cash, narita2021algorithm, elacqua2016short, egger2015impact, evans2017smart, cohodes2014merit, becht2016does}. \\

In response to these empirical applications, several methodological papers have been written to estimate treatment effects in multivariate RD settings. Notable contributions include \textcite{imbens2019optimized}, who use convex optimization, \textcite{cheng2023estimation}, who applies thin plate splines, and \textcite{gunsilius2023free}, who develop methods for cases where the treatment boundary is unknown. More recently, \textcite{cattaneo2025estimationdist}  and \textcite{cattaneo2025estimation} explore using local polynomial regressions to estimate treatment effects along two-dimensional boundaries, as well as aggregated effects. \\

Despite these methodological developments, relatively little attention has been devoted to global testing: the task of determining whether a discontinuity exists at any point along a multivariate treatment boundary. A straightforward strategy is to estimate treatment effects at many boundary locations and construct uniform confidence intervals at each point, assessing whether zero lies outside any of the resulting bands. When the sample size is large, this approach is appealing: it not only detects whether a discontinuity is present but also reveals where along the boundary it occurs. As a result, uniform confidence bands provide a rich and granular view of underlying treatment effect patterns. However, their performance deteriorates when the number of running variables increases or when the sample size is modest. In such settings, the number of observations effectively informing each boundary point becomes small, leading to unreliable asymptotic approximations. Through a simple simulation in the next subsection, this standard approach is shown to exhibit poor finite-sample performance. \\

To combat this issue, this paper proposes a new approach to global testing where a multivariate estimator is used to pool individual treatment effects together, which allows for more robust global inference even with smaller sample sizes. In simulations, a test size closer to a target 5\% is achieved even with a modest sample size. Even with two or more running variables, this approach can be used to test for discontinuities in multivariate conditional expectation functions and multivariate densities. This method is intended to complement existing multivariate RD estimators by providing a robust tool for assessing the reliability of estimated effects and confidence intervals. \\

The proposed global testing procedure works in two stages. First, a multivariate estimator is used to estimate whether the treatment effects along the boundary are positive or negative. The local linear forest estimator of \textcite{friedberg2020local} is used to estimate treatment effect heterogeneity, while a variation of the random forest density estimator of \textcite{wen2022random} is used for the density discontinuity test. Afterwards, these estimates are incorporated into a univariate test statistic that is used in the final hypothesis test. This approach has a number of advantages. First, by pooling together estimates along the boundary, the asymptotics for the final test statistic are made more reliable than they would be for each estimate individually. Second, the random forest estimators help maintain reasonable performance even when the number of running variables becomes relatively large. Third, the final test statistic only uses the signs of the multivariate estimates rather than the complete point estimates. This makes the final test statistic robust even when the initial treatment effects are estimated poorly, which translates to powerful theoretical properties. \\

This paper makes two contributions. First, it contributes to the literature on multivariate regression discontinuity designs by proposing a global test that remains robust even in relatively small samples. Second, it contributes to the literature on bunching and manipulation testing in regression discontinuity designs. Several papers have been written on univariate manipulation tests (\textcite{cattaneo2020simple, mccrary2008manipulation, cattaneo2024local}). \textcite{crippa2025manipulation} extends these types of tests to accommodate many running variables. However, this test only applies to boundaries like the one in Figure 1 in the next subsection. The test outlined in this paper can be applied to a wide range of boundary shapes, making it applicable even in settings with complex treatment rules, as in geographic regression discontinuities. \\

The remainder of this paper is organized as follows. Subsection 1.1 discusses a simulation example that demonstrates where the uniform confidence bands approach fails, while subsection 1.2 discusses notation and assumptions. Section 2 discusses identification for both the conditional expectation and density cases. Section 3 discusses estimation and inference for the final test statistics used for global testing. Section 4 shows simulation results that demonstrate the properties of the outlined global testing procedure. Section 5 discusses an application, and section 6 concludes.

\subsection{Simple Simulation Example}
To illustrate the limitations of the uniform confidence band approach in finite samples, consider the following process:
$$
Y_i = \frac{X_{1,i} + X_{2,i}}{3} + \epsilon_i
$$
where $\epsilon_i \sim \mathcal{N}(0, 0.05)$, $X_{1,i}, X_{2,i}$ come from independent uniform distributions on the interval $[-1, 1]$, and all observations are i.i.d across $i$. Further, define the treatment indicator as follows:
$$
T_i = \begin{cases}
    1, \quad X_{1,i}, X_{2,i} \geq 0 \\
    0, \quad \text{otherwise}
\end{cases}
$$
For this treatment rule, the treatment region looks as follows:
\begin{figure}[H]
\centering
\begin{tikzpicture}[scale = 3]
    % Define the regions with shading
    \fill[red, opacity=0.5] (-1,-1) rectangle (1,0);
    \fill[red, opacity=0.5] (-1,0) rectangle (0,1);
    \fill[blue, opacity=0.5] (0,0) rectangle (1,1);
    
    % Draw the axes
    \draw[thick,->] (-1.2,0) -- (1.2,0) node[right] {$x_1$};
    \draw[thick,->] (0,-1.2) -- (0,1.2) node[above] {$x_2$};

    % Add labels for clarity
    \node at (0.5,0.5) {Treated};
    \node at (-0.5,0.5) {Not Treated};
    \node at (0.5,-0.5) {Not Treated};
    \node at (-0.5,-0.5) {Not Treated};

\end{tikzpicture}
    \caption{Treatment and control regions for simulated treatment rule.}
\end{figure}
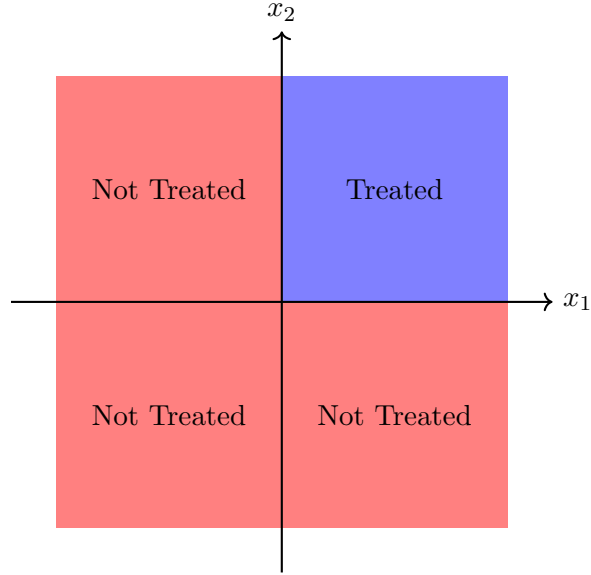

The goal is to test whether there is a discontinuity in $\mathbb{E}[Y_i | X_i = x]$ for some $x$ along the boundary between the treated and non-treated regions. To do this, the local polynomial estimator of \textcite{cattaneo2025estimation} is used to estimate treatment effect at 79 evenly-spaced points along the boundary and apply 95\% uniform confidence bands for inference. Afterwards, rejection rates are computed over 5,000 simulations. Since there are no discontinuities in this example, the rejection rate should be approximately 5\%. To assess robustness, the rejection rates from the sum of treatment effects and sum of squared treatment effects are also reported. Additionally, these three approaches are performed only on "safe" points $x = (0.5, 0)$ and $x = (0, 0.5)$ to avoid estimation at edges and kink points. Using number of observations $n = 20,000$ and $n = 1,000$, the rejection rates are as follows:

\begin{table}[H]
\centering
\caption{Simulated Rejection Rates for Bivariate Local Polynomial Regression}
\label{tab:rd2d-sim}
\begin{tabular}{c cc cc cc}
\toprule
& \multicolumn{2}{c}{Uniform Bands} & \multicolumn{2}{c}{Sum of $\hat{\tau}(x)$} & \multicolumn{2}{c}{Sum of $\hat{\tau}(x)^2$} \\
\cmidrule(lr){2-3} \cmidrule(lr){4-5} \cmidrule(lr){6-7}
$n$ & Full & Safe & Full & Safe & Full & Safe \\
\midrule
1,000  & 0.3652 & 0.1540 & 0.1006 & 0.1080 & 0.1714 & 0.1276 \\
20,000 & 0.0748 & 0.0560 & 0.0564 & 0.0566 & 0.0616 & 0.0598 \\
\bottomrule 
\end{tabular}
\end{table}

In the case where $n = 20,000$, the rejection rates are fairly close to 0.05. However, with a smaller number of observations, the rejection rate becomes much larger than the expected rejection rate of 0.05. Summation mitigates this problem only partially, because the aggregated statistic remains sensitive to inaccuracies in the individual local estimates used to construct it.

\subsection{Setup and Notation}
Let $Y_i$ denote the outcome variable for observation $i$ and let $X_i$ be a vector of running variables with support $\mathbb{X}$. Let $\Omega$ denote the set of treated units and let $\partial \Omega$ be the boundary of $\Omega$. Let $T_i = 1$ for $i \in \Omega$ and $T_i = 0$ for $i \notin \Omega$. Also, let $Y_i(t)$ denote the potential outcome for unit $i$ with treatment status $t$. Let $g_\Omega(X_i)$ be the signed (Euclidean) distance function relative to $\partial \Omega$, with $f_{g_\Omega}(g)$ being the density of $g_\Omega(X_i)$. For simplicity, define $g_\Omega(X_i) \equiv g_i$. With that, here are some necessary assumptions: 
\begin{assumption}
The set $\Omega$ is closed (without loss of generality), bounded, and has a piecewise smooth boundary.
\end{assumption}

\begin{assumption}
All observations are independently and identically distributed (i.i.d) across $i$.
\end{assumption}

\begin{assumption}
All $Y_i, X_i$ take values in bounded supports.
\end{assumption}

\begin{assumption}
$\mathbb{V}(Y_i(1) | X_i = x), \mathbb{V}(Y_i(0) | X_i = x) > 0$ are continuous and bounded for all $x \in \mathbb{X}$ within some neighborhood of $\partial \Omega$.
\end{assumption}

\begin{assumption}
The density $f_{g_\Omega}(g)$ is bounded away from zero in some neighborhood around $g = 0$.
\end{assumption}

\begin{assumption}
The kernel function $K(\cdot)$ is a bounded and symmetric function that takes values on the interval $[-1, 1]$.
\end{assumption}

\begin{assumption}
$\mathbb{E}[Y_i(t) | X_i = x]$ is twice continuously differentiable and bounded for all $x \in \mathbb{X}$ and for all $t \in \{0, 1\}$.
\end{assumption}

\begin{assumption}
The joint density of the running vector, $f_X(\cdot)$, is twice continuously differentiable and bounded within some neighborhood of $x \in \partial \Omega$.
\end{assumption}

\begin{assumption}
The joint density of the running vector, $f_X(\cdot)$, is twice continuously differentiable and bounded above and away from zero within some neighborhood of $x \in \partial \Omega$, except where $x \in \partial \Omega$.
\end{assumption}

\section{Identification}
This section develops the identification framework for two proposed global test statistics, covering both conditional expectation discontinuities (treatment effect heterogeneity) and density discontinuities (manipulation).

\subsection{Treatment Effect Heterogeneity}
Define $\tau(x) = \mathbb{E}[Y_i(1) - Y_i(0) | X_i = x]$. Here, the hypothesis of interest is as follows:
\begin{flalign*}
H_0 &: \tau(x) = 0 \text{ for all } x \in \partial \Omega  \\
H_a &: \tau(x) \neq 0 \text{ for some }x \in \partial \Omega 
\end{flalign*}
% Add extensions in another version of this paper. 
    % Testing for constant treatment effects at some level
    % Addressing running variable scaling issue
    % Treatment effect heterogeneity by some covariate(s)

To test the hypothesis above without uniform confidence bands, the running vector $X_i$ is first aggregated into a scalar $g_i$. Using $g_i$ as a running variable leads to the following: 

\begin{theorem}
Under Assumptions 1-3, 5, and 7, we have:
\[
\lim_{\epsilon \rightarrow 0^+} \mathbb{E}[Y_i \mid g_i = \epsilon] - \lim_{\epsilon \rightarrow 0^-} \mathbb{E}[Y_i \mid g_i = \epsilon] 
= \frac{1}{f_{g_\Omega}(0)} \int_{x : g_\Omega(x) = 0} \tau(x) f_X(x) \, dS_\Omega,
\]
where \( dS_\Omega \) denotes a Hausdorff measure over $\partial \Omega$.
\end{theorem}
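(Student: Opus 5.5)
The plan is to write each one-sided conditional expectation as a ratio of surface integrals over the level sets $\{x : g_\Omega(x) = \epsilon\}$ using the coarea formula. Then I pass to the limit $\epsilon \to 0^\pm$ and use continuity to pull the level sets back onto $\partial\Omega$. Since the signed distance is Lipschitz with $|\nabla g_\Omega| = 1$ almost everywhere (Assumption 1 makes $g_\Omega$ smooth near $\partial\Omega$ away from a null set of kinks), the coarea formula gives, for any bounded measurable $h$,
\[
\mathbb{E}[h(X_i)\mathbf{1}\{g_i \in A\}] = \int_A \int_{\{g_\Omega = s\}} h(x) f_X(x)\, d\mathcal{H}^{d-1}(x)\, ds .
\]
Taking $h \equiv 1$ identifies $f_{g_\Omega}(s) = \int_{\{g_\Omega=s\}} f_X\, d\mathcal{H}^{d-1}$. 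Taking $h(x) = \mathbb{E}[Y_i \mid X_i = x]$ then gives, for almost every $\epsilon$,
\[
\mathbb{E}[Y_i \mid g_i = \epsilon] = \frac{1}{f_{g_\Omega}(\epsilon)}\int_{\{g_\Omega=\epsilon\}} \mathbb{E}[Y_i\mid X_i=x] f_X(x)\, d\mathcal{H}^{d-1}(x).
\]
By Assumption 5 the denominator is bounded away from zero near $0$, so the ratio is well defined.

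Next I use the treatment rule. With the sign convention $g_\Omega > 0$ on the treated side, for $\epsilon > 0$ the level set lies in $\Omega$, so $\mathbb{E}[Y_i\mid X_i=x] = \mu_1(x) := \mathbb{E}[Y_i(1)\mid X_i=x]$ there. For $\epsilon<0$ the conditional mean is $\mu_0(x)$. By Assumption 7 both $\mu_1$ and $\mu_0$ extend continuously (indeed $C^2$) across $\partial\Omega$. The key step is to show that, for continuous bounded $\phi$,
\[
\int_{\{g_\Omega=\epsilon\}} \phi f_X\, d\mathcal{H}^{d-1} \to \int_{\partial\Omega} \phi f_X\, dS_\Omega \quad \text{as } \epsilon\to0^\pm .
\]
Applied with $\phi=1$, this also gives $f_{g_\Omega}(0^\pm)=f_{g_\Omega}(0)$.

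To prove this convergence on smooth pieces, I would parametrize $\{g_\Omega=\epsilon\}$ by the normal map $y\mapsto y+\epsilon\nu(y)$, $y\in\partial\Omega$, valid for small $|\epsilon|$ by the tubular neighborhood theorem. The Jacobian of this map is $1+O(\epsilon)$, controlled by curvature, and $\phi f_X(y+\epsilon\nu(y))\to \phi f_X(y)$ by continuity and boundedness. Dominated convergence then finishes, with boundedness of $\Omega$ (Assumption 1) ensuring finite surface measure.

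The main obstacle is the kink and edge points of a merely piecewise smooth boundary. Near a corner the normal parametrization fails: level sets of the distance function acquire circular arcs near convex corners and lose pieces near concave ones. My first attempt would be to excise $\delta$-balls around the lower-dimensional singular set. Those excised pieces of the level sets have $\mathcal{H}^{d-1}$-measure $O(\delta + |\epsilon|)$ uniformly in $\epsilon$, and the integrands are bounded by Assumptions 3 and 7 together with boundedness of $f_X$. So their contribution vanishes as $\delta\to0$, and the smooth-piece argument applies on the remainder.

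Assembling the pieces, the right limit is $f_{g_\Omega}(0)^{-1}\int_{\partial\Omega}\mu_1 f_X\,dS_\Omega$ and the left limit is $f_{g_\Omega}(0)^{-1}\int_{\partial\Omega}\mu_0 f_X\,dS_\Omega$. Subtracting gives the stated formula with $\tau=\mu_1-\mu_0$. The "almost every $\epsilon$" caveat disappears because the continuity argument shows the ratio is continuous in $\epsilon$ on each side, so we may take that continuous version.
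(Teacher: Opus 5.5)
Your proposal is correct and follows essentially the paper's route. Both arguments write $\mathbb{E}[Y_i\mid g_i=\epsilon]$ through the coarea (change-of-variables) formula as a ratio of integrals over the level set $\{g_\Omega=\epsilon\}$ and let $\epsilon\to0^\pm$ by continuity, except that you actually prove the continuity step (tubular-neighborhood parametrization plus excision of the singular set), which the paper merely asserts ``by Assumption 7.'' State explicitly that the two-sided limit $f_X(y+\epsilon\nu(y))\to f_X(y)$ and the boundedness of $f_X$ come from Assumption 8, not from the listed Assumptions 1--3, 5, 7; this is genuinely needed, since a density that jumps across $\partial\Omega$ with $\tau\equiv0$ can still produce a nonzero jump, and the paper's proof tacitly relies on it too.
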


Proofs of Theorem 1 and all subsequent theorems are deferred to the appendix. Intuitively, Theorem 1 implies that when the signed distance function is used as a running variable, the standard RD difference identifies a weighted average across $\tau(x)$ for all $x \in \partial \Omega$. Although this quantity may be of interest to researchers, it is not sufficient to test the main hypothesis because $\mathbb{E}[Y_i(1) - Y_i(0) | g_i = 0]$ may equal zero even when there exists points $x \in \partial \Omega$ where $\tau(x) \neq 0$. This can happen when the positive discontinuities mix with negative discontinuities to produce a null result. To address this issue, let $\gamma(X_i) \equiv \gamma_i$ denote the closest boundary point to $X_i$ and let $\Gamma(X_i) \equiv \Gamma_i = 1$ if $\tau(\gamma(X_i)) \geq 0$ and $\Gamma_i = 0$ otherwise. This leads to the following result:

\begin{theorem}
Under Assumptions 1-3, 5, and 7, we have:
\begin{align*}
\tau_\Gamma \equiv \left(\lim_{\epsilon \rightarrow 0^+} \mathbb{E}[Y_i \mid g_i = \epsilon, \Gamma_i = 1] - \lim_{\epsilon \rightarrow 0^-} \mathbb{E}[Y_i \mid g_i = \epsilon, \Gamma_i = 1] \right) \mathbb{E}[\Gamma_i | g_i = 0]  -\\ \left(\lim_{\epsilon \rightarrow 0^+} \mathbb{E}[Y_i \mid g_i = \epsilon, \Gamma_i = 0] - \lim_{\epsilon \rightarrow 0^-} \mathbb{E}[Y_i \mid g_i = \epsilon, \Gamma_i = 0] \right)(1-\mathbb{E}[\Gamma_i | g_i = 0])
\\= \frac{1}{f_{g_\Omega}(0)} \int_{x : g_\Omega(x) = 0} |\tau(x)| f_X(x) \, dS_\Omega = \frac{1}{f_{g_\Omega}(0)} \int_{x : g_\Omega(x) = 0} (2\Gamma(x) - 1)\tau(x) f_X(x) \, dS_\Omega
\end{align*}
\end{theorem}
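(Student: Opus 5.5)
The plan is to reduce Theorem 2 to Theorem 1 by applying it separately on the two pieces of the boundary where $\tau$ has a fixed sign. Define $\partial\Omega_1 = \{x \in \partial\Omega : \tau(x) \geq 0\}$ and $\partial\Omega_0 = \partial\Omega \setminus \partial\Omega_1$. Since $\Gamma_i = \Gamma(\gamma(X_i))$ depends on $X_i$ only through its nearest boundary point, conditioning on $\Gamma_i = 1$ restricts $X_i$ to the region of points projecting onto $\partial\Omega_1$.

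\textbf{Step 1: a conditional version of Theorem 1.} I would repeat the argument of Theorem 1 with the joint density of $(g_i, \Gamma_i = 1)$ in place of $f_{g_\Omega}$. Write $X$ near the boundary in tubular coordinates $x = s + g\,\nu(s)$, with $s \in \partial\Omega$ and $\nu$ the unit normal. For small $|g|$ the Jacobian is $1 + O(g)$ away from kinks, and Assumption 1 keeps the kink set of Hausdorff measure zero. Then:
\[
\lim_{\epsilon\to 0^\pm}\mathbb{E}[Y_i \mid g_i=\epsilon, \Gamma_i = 1]
=\frac{\int_{\partial\Omega_1}\mathbb{E}[Y_i(t^\pm)\mid X_i=x] f_X(x)\,dS_\Omega}{\int_{\partial\Omega_1} f_X(x)\,dS_\Omega},
\]
with $t^+=1$, $t^-=0$. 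This uses the continuity and boundedness of Assumption 7 together with bounded convergence (Assumption 3). The difference of the two limits is therefore $\int_{\partial\Omega_1}\tau f_X\,dS_\Omega \big/ \int_{\partial\Omega_1} f_X\,dS_\Omega$, and analogously on $\partial\Omega_0$.

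\textbf{Step 2: the weights.} By the same tubular computation, $f_{g_\Omega}(0)=\int_{\partial\Omega} f_X\,dS_\Omega$ and $\mathbb{E}[\Gamma_i\mid g_i=0] = \int_{\partial\Omega_1} f_X\,dS_\Omega / f_{g_\Omega}(0)$, with $1-\mathbb{E}[\Gamma_i \mid g_i = 0]$ giving the $\partial\Omega_0$ share. Assumption 5 ensures the denominators are positive. Multiplying, the normalizing integrals cancel, and $\tau_\Gamma$ becomes
\[
\frac{1}{f_{g_\Omega}(0)}\Big(\int_{\partial\Omega_1}\tau f_X\,dS_\Omega-\int_{\partial\Omega_0}\tau f_X\,dS_\Omega\Big).
\]
Since $\tau\ge0$ on $\partial\Omega_1$ and $\tau<0$ on $\partial\Omega_0$, this equals both $\frac{1}{f_{g_\Omega}(0)}\int|\tau|f_X\,dS_\Omega$ and $\frac{1}{f_{g_\Omega}(0)}\int(2\Gamma-1)\tau f_X\,dS_\Omega$, which is the claim.

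\textbf{Main obstacle.} The delicate step is Step 1. Conditioning on $\Gamma_i$ cuts the tube along the level set $\{\tau=0\}$, which need not be regular, and either piece may have zero boundary measure, making the conditional limit undefined. I would handle the degenerate case by convention: if $\int_{\partial\Omega_j} f_X\,dS_\Omega = 0$, its weight is zero and the product is defined as zero. For the general case I would avoid conditional limits altogether and instead work with the joint quantity $\lim_{\epsilon \to 0^\pm}\mathbb{E}[Y_i\Gamma_i\mid g_i=\epsilon]$. This quantity is obtained directly by inserting the indicator $\mathbb{1}\{s\in\partial\Omega_1\}$ into the tubular integral of Theorem 1. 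Because the indicator is constant along each normal fiber, the Theorem 1 argument goes through unchanged, with no regularity needed on $\{\tau=0\}$.
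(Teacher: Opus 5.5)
Your proposal is correct and takes essentially the same route as the paper. The paper also conditions on $\Gamma_i$ and writes each conditional jump as a surface integral of $\tau f_X$ over $\{g_\Omega = 0, \Gamma = j\}$, normalized by $f_{g,\Gamma}(0,j)$; it then uses $\mathbb{P}(\Gamma_i = j \mid g_i = 0)/f_{g,\Gamma}(0,j) = 1/f_{g_\Omega}(0)$ to cancel that normalization and splits by sign to obtain $|\tau|$. Your tubular-coordinate argument and your handling of null pieces through $\mathbb{E}[Y_i\Gamma_i \mid g_i = \epsilon]$ only make explicit what the paper's change-of-variables step assumes; note that both arguments also tacitly need $f_X$ continuous across $\partial\Omega$ (Assumption 8), which the statement does not list.
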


Here, $\tau_\Gamma$ represents the weighed average of $|\tau(x)|$ over $x \in \partial \Omega$. Intuitively, $\Gamma_i$ serves as a sign‐normalization factor: it reverses the sign of $Y_i$ whenever the corresponding treatment effect is negative, which aligns all effects in the same direction. In other words, the original hypothesis can be rewritten as:
\begin{flalign*}
H_0 &: \tau_\Gamma = 0   \\
H_a &: \tau_\Gamma > 0 
\end{flalign*}

To make estimation easier and more efficient, notice that $\tau_\Gamma$ can be rewritten as:
$$
\tau_\Gamma = \lim_{\epsilon \rightarrow 0^+} \mathbb{E}[(2\Gamma_i - 1)Y_i | g_i = \epsilon] - \lim_{\epsilon \rightarrow 0^-} \mathbb{E}[(2\Gamma_i - 1)Y_i | g_i = \epsilon]
$$
A key advantage of this approach is that it only uses the sign of $\tau(\gamma(X_i))$. As a result, $\Gamma_i$ can be estimated reliably even if the underlying $\tau(\gamma(X_i))$ is noisy, which is an important property in small samples or in settings with many running variables. In contrast, procedures that aggregate raw treatment effects along the boundary are fragile: a single poorly estimated effect can materially distort the resulting test statistic. This phenomenon is evident in Table 1. 

\subsection{Density Manipulation}
Consider the problem of testing whether the joint density of the running variables is continuous at the boundary. Such a test serves as a diagnostic for endogenous manipulation of the assignment variables. In this sense, the density-discontinuity test of \textcite{mccrary2008manipulation} is being extended to a multivariate setting. Define the following:
\begin{align*}
f_X^+(x) & = \lim_{x' \rightarrow x, g_\Omega(x') \geq 0} f_X(x') , \quad f_X^-(x) = \lim_{x' \rightarrow x, g_\Omega(x') < 0} f_X(x') \\
\tau_f(x) & = f_X^+(x) - f_X^-(x)
\end{align*}
In this case, the hypothesis of interest is as follows:
\begin{flalign*}
H_0 &: \tau_f(x) = 0 \text{ for all } x \in \partial \Omega  \\
H_a &: \tau_f(x) \neq 0 \text{ for some }x \in \partial \Omega 
\end{flalign*}
Using a change-of-variables for densities:
$$
f_{g_\Omega}(g) = \int_{x: g_\Omega(X) = g} f_X(x) dS_\Omega \implies \lim_{\epsilon \rightarrow 0^+} f_{g_\Omega}(\epsilon) - \lim_{\epsilon \rightarrow 0^-} f_{g_\Omega}(\epsilon) = \int_{x: g_\Omega(x) = 0} \tau_f(x)dS_\Omega 
$$
Let $\Lambda(X_i) \equiv \Lambda_i = 1$ if $\tau_f(\gamma(X_i)) \geq 0$ and $\Lambda_i = 0$ otherwise. With that, the result below follows:

\begin{theorem}
Under Assumption 1-3:
\begin{align*}
& \tau_\Lambda^f \equiv \left[ \lim_{\epsilon \rightarrow 0^+} f_{g_\Omega| \Lambda}(g = \epsilon | \Lambda_i = 1) - \lim_{\epsilon \rightarrow 0^-} f_{g_\Omega | \Lambda}(g = \epsilon | \Lambda_i = 1) \right] \mathbb{P}(\Lambda_i = 1) - \\
& \left[ \lim_{\epsilon \rightarrow 0^+} f_{g_\Omega | \Lambda}(g = \epsilon | \Lambda_i = 0) - \lim_{\epsilon \rightarrow 0^-} f_{g_\Omega | \Lambda}(g = \epsilon | \Lambda_i = 0) \right] \mathbb{P}(\Lambda_i = 0) \\
& = \int_{X: g_\Omega(X) = 0} |\tau_f(x)| dS_\Omega = \int_{X: g_\Omega(X) = 0} (2\Lambda(X) - 1)\tau_f(x) dS_\Omega 
\end{align*}
\end{theorem}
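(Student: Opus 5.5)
The plan is to reduce everything to the change-of-variables (coarea) identity quoted just before the statement, applied separately on the two pieces of the boundary $\partial\Omega_1 = \{x \in \partial\Omega : \tau_f(x) \ge 0\}$ and $\partial\Omega_0 = \{x \in \partial\Omega : \tau_f(x) < 0\}$. The first step is to write the conditional densities as
\[
f_{g_\Omega|\Lambda}(g \mid \Lambda_i = \ell)\,\mathbb{P}(\Lambda_i = \ell) = \frac{d}{dg}\mathbb{P}(g_i \le g, \Lambda_i = \ell), \qquad \ell \in \{0,1\}.
\]
This removes $\mathbb{P}(\Lambda_i=\ell)$ from each bracket. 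The claim then becomes a statement about the one-sided limits of the joint sub-densities $h_\ell(g) \equiv \frac{d}{dg}\mathbb{P}(g_i \le g, \Lambda_i=\ell)$.

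The second step is to compute $h_\ell$ near $g=0$. Since $\Lambda_i = \Lambda(X_i)$ depends only on the nearest boundary point $\gamma(X_i)$, the event $\{\Lambda_i = \ell\}$ equals $\{\gamma(X_i) \in \partial\Omega_\ell\}$. Apply the coarea formula with the signed distance $g_\Omega$, whose gradient has unit norm almost everywhere near a piecewise smooth boundary (Assumption 1). This gives
\[
h_\ell(g) = \int_{\{x : g_\Omega(x) = g,\ \gamma(x) \in \partial\Omega_\ell\}} f_X(x)\, dS_\Omega .
\]
As $g \to 0^\pm$, the level set $\{g_\Omega = g\}$ restricted to $\gamma^{-1}(\partial\Omega_\ell)$ converges to $\partial\Omega_\ell$, via the normal map $x = \gamma + g\,\nu(\gamma)$. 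The Jacobian of that map tends to $1$. Boundedness of $f_X$ (Assumption 3) and dominated convergence then give
\[
\lim_{g\to0^\pm} h_\ell(g) = \int_{\partial\Omega_\ell} f_X^\pm(x)\, dS_\Omega .
\]
Subtracting the two one-sided limits yields $\int_{\partial\Omega_\ell} \tau_f\, dS_\Omega$.

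The third step is algebra. The stated combination is the $\ell=1$ jump minus the $\ell=0$ jump:
\[
\int_{\partial\Omega_1} \tau_f\, dS_\Omega - \int_{\partial\Omega_0} \tau_f\, dS_\Omega = \int_{\partial\Omega} (2\Lambda - 1)\,\tau_f\, dS_\Omega = \int_{\partial\Omega} |\tau_f|\, dS_\Omega .
\]
The second equality holds because $(2\Lambda-1)$ is exactly the sign of $\tau_f$ on $\partial\Omega$, with the convention that $\tau_f = 0$ contributes nothing. This mirrors the argument I would expect for Theorem 2, with $f_X$ in place of $\tau f_X$ and no division by $f_{g_\Omega}(0)$, since densities rather than conditional means are being differenced.

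The main obstacle is the limit in the second step. Two issues arise there. First, $\gamma$ need not be single-valued near kinks and corners of a piecewise smooth boundary, and $\partial\Omega_\ell$ may have an irregular relative boundary. Second, the one-sided limits $f_X^\pm$ must exist and be attained uniformly enough to pass to the limit.

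For the first issue, I would argue that the set of points with non-unique nearest boundary point, and the preimages of the kink set, have tubular measure $o(|g|)$. Hence they do not affect $h_\ell(g)$ in the limit. Within $\{x : |g_\Omega(x)| < \delta\}$ away from the kink set, $\gamma$ is smooth and the normal-coordinate change of variables is valid.

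For the second issue, I would implicitly use that $f_X$ is continuous up to $\partial\Omega$ from each side. This is needed for $\tau_f$ to be defined in the first place, and together with boundedness it justifies dominated convergence.
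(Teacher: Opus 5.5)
Your proposal is correct and follows the paper's proof. In both, each conditional density is multiplied by $\mathbb{P}(\Lambda_i=\ell)$ to give the joint sub-density of $(g_i,\Lambda_i=\ell)$, each sub-density's jump at $g=0$ is identified with $\int_{\{g_\Omega=0,\,\Lambda=\ell\}}\tau_f\,dS_\Omega$ by the change-of-variables formula, and the pieces are combined using $(2\Lambda-1)\tau_f=|\tau_f|$ on $\partial\Omega$. The paper only asserts the middle step, so your coarea and normal-coordinate argument usefully fills it in. As in the paper, though, the one-sided continuity and local boundedness of $f_X$ you rely on are not supplied by Assumptions 1--3; Assumption 3 bounds the support of $X_i$, not its density.
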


Here, $\tau_\Lambda^f$ is equivalent to the sum of $|\tau_f(x)|$ over all $x \in \partial \Omega$. Like in the boundary heterogeneity case, a multivariate density test can be written through the following hypothesis:
\begin{flalign*}
H_0 &: \tau_\Lambda^f = 0  \\
H_a &: \tau_\Lambda^f > 0
\end{flalign*}

To make this expression easier to estimate, $\tau_\Lambda^f$ can be rewritten as follows:
$$
\tau_\Lambda^f = \lim_{\epsilon \rightarrow 0^+} f_{(2\Lambda-1)g}(\epsilon) - \lim_{\epsilon \rightarrow 0^-} f_{(2\Lambda-1)g}(\epsilon)
$$

In other words, $\tau_\Lambda^f$ can be computed as the discontinuity in the density of $g_i^* \equiv (2\Lambda_i - 1)g_i$ at $g_i^* = 0$. 

\section{Estimation and Inference}
The following subsections describe estimation and inference procedures for both the treatment effect heterogeneity test and the density manipulation test. Although the two procedures are closely related, there are several important distinctions between them.

\subsection{Treatment Effect Heterogeneity}

Based on the identification results of the previous section, the quantity of interest is the following:
$$
\tau_\Gamma = \lim_{\epsilon \rightarrow 0^+} \mathbb{E}[Y_i^* | g_i = \epsilon] - \lim_{\epsilon \rightarrow 0^-} \mathbb{E}[Y_i^* | g_i = \epsilon]
$$
where $Y_i^* = (2\Gamma(X_i) - 1)Y_i$. However, $\Gamma(X_i)$ is not observed, so it will need to be estimated as $\Gamma_n(X_i)$. Now define $\hat{Y}_i = (2\Gamma_n(X_i) - 1)Y_i$. With that, notice the following:
\begin{align*}
\hat{Y}_i - Y_i^* & = Y_i(2\Gamma_n(X_i) - 1 - (2\Gamma(X_i) - 1)) \\
& = 2Y_i(\Gamma_n(X_i) - \Gamma(X_i)) \\
& \equiv \hat{\eta}_i
\end{align*}
In other words, $\hat{\eta}_i$ can be thought of as a measurement error term, where $\Gamma_i$ is a higher dimensional nuisance parameter. As in \textcite{chernozhukov2018double}, estimating $\Gamma_i$ and $\tau_\Gamma$ using the same data will cause problems for estimation and inference. To address these issues, $\hat{Y}_i$ will be generated as follows:
\begin{enumerate}
    \item[1.] Split the $n$ observations into $K$ folds.

    \item[2.] Denote all observations in fold $k \in \{1, 2, ..., K\}$ as $I_k$. For each fold $k$, use data from all other folds to estimate $\Gamma(\cdot)$. Call this estimate $\Gamma_{n,-k}(\cdot)$. 

    \item[3.] Define $\hat{Y}_{i,k} = (2\Gamma_{n,-k}(X_{i,k})-1)Y_{i,k}$ 
   
\end{enumerate}
where the index $k$ denotes the fold of observation $i$. To estimate $\Gamma(\cdot)$, the Local Linear Forest of \textcite{friedberg2020local} will be used. The advantage of this method is that it easily allows for efficient boundary point estimation with any number of running variables. 
\begin{lemma}
Let $\mu_n^{(llf)}(x_0)$ represent a local linear forest estimator of $\mu(x_0) \equiv \mathbb{E}[Y_i(t) | X_i = x_0]$ at boundary point $x_0$. Under assumptions 1-4 and 7-8, the local linear forest estimator is consistent for $\mu(x_0)$.
\end{lemma}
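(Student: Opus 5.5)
The plan is to reduce the boundary case to the consistency theory of \textcite{friedberg2020local}, who establish consistency (indeed asymptotic normality) of local linear forests for interior points under conditions close to those of \textcite{wager2018estimation}: honest, regular, symmetric trees with subsampling rate $s \to \infty$, $s/n \to 0$; bounded outcomes; a density bounded above and away from zero; a smooth conditional mean; and positive, continuous conditional variance. I would first check that Assumptions 1--4 and 7--8 give the regression-side conditions. Assumption 2 gives i.i.d.\ sampling. Assumption 3 gives bounded $Y_i$ and compact support. Assumption 4 gives the variance conditions. Assumption 7 gives a $C^2$ conditional mean, so a first-order Taylor expansion has an $O(\|x-x_0\|^2)$ remainder. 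Assumption 8 controls $f_X$ near $\partial\Omega$. Because we estimate $\mathbb{E}[Y_i(t)\mid X_i=x_0]$ separately on each side, the forest is fit only on units with $T_i=t$. For $t=1$ the effective support is $\Omega$, which is closed with piecewise smooth boundary by Assumption 1, so $x_0\in\partial\Omega$ is a boundary point of that support.

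Next, write the forest prediction as a locally weighted least squares fit,
\[
\big(\mu_n^{(llf)}(x_0),\hat\theta\big)=\arg\min_{\mu,\theta}\sum_i \alpha_i(x_0)\big(Y_i-\mu-(X_i-x_0)'\theta\big)^2+\lambda\|\theta\|^2 .
\]
Then decompose the error into a variance term and a bias term:
\[
\mu_n^{(llf)}(x_0)-\mu(x_0)=e_1'\hat M^{-1}\sum_i\alpha_i(x_0)\tilde X_i\varepsilon_i \;+\; e_1'\hat M^{-1}\sum_i\alpha_i(x_0)\tilde X_i R_i ,
\]
where $\tilde X_i=(1,(X_i-x_0)')'$, $\hat M=\sum_i \alpha_i\tilde X_i\tilde X_i'+\mathrm{diag}(0,\lambda I)$, and $R_i$ is the Taylor remainder.

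\emph{Bias term.} The weights are supported on the leaves containing $x_0$. Under Assumption 8 the density is bounded away from zero on the treated side near $x_0$, so the standard argument that leaf diameters shrink in probability carries over, now applied to the one-sided region. This gives $\max_{i:\alpha_i>0}\|X_i-x_0\|\to_p 0$, and hence $|R_i|\le C\|X_i-x_0\|^2\to 0$ uniformly over those $i$.

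\emph{Variance term.} Honesty makes $\varepsilon_i$ conditionally mean zero given the weights. Boundedness of $Y$ and $\sum_i\alpha_i^2\to 0$, which follows from $s/n\to 0$ by the usual subsampling bound, give $\sum_i\alpha_i\tilde X_i\varepsilon_i\to_p 0$ after rescaling the slope coordinates by the leaf diameter.

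The main obstacle is showing that $\hat M^{-1}$, suitably rescaled, stays bounded when $x_0$ sits on the edge of the support. In that case the local design is one-sided rather than symmetric. Here I would argue that, since $\partial\Omega$ is piecewise smooth, for almost every $x_0$ (away from kinks) the treated region near $x_0$ contains a cone or half-ball of positive solid angle. The density is bounded below on it by Assumption 8. So the rescaled second-moment matrix of $(X_i-x_0)$ over the leaf is bounded below by the positive-definite moment matrix of a half-ball, uniformly in $n$ with high probability. This is exactly the mechanism that makes local linear regression boundary-adaptive in one dimension. The ridge term $\lambda$ only improves conditioning. At kink points I would assume a positive-angle cone condition, or else restrict the claim to non-kink boundary points, which form a set of full surface measure.

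Combining the two terms with Slutsky's theorem gives $\mu_n^{(llf)}(x_0)\to_p\mu(x_0)$. Doing this for $t=0,1$ also yields consistency of the estimated $\tau(x_0)$, and hence of its sign wherever $\tau(x_0)\neq 0$.
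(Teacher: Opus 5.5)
Your argument follows the paper's route. You represent the local linear forest as a ridge-penalized, forest-weighted least-squares fit and Taylor-expand $\mu$ at $x_0$. The constant term is reproduced exactly, which is the paper's identity $e_1'(\boldsymbol{X}'A^+\boldsymbol{X}+\lambda_nJ)^{-1}\boldsymbol{X}'A^+\mathbbm{1}=1$. The remainder is then removed by leaf shrinkage and the noise term by honesty. In one respect you are more careful than the paper. Its ``$O(\bar R^2)$'' remainder and its step from ``mean zero'' to ``$\xrightarrow{p}0$'' both tacitly require the equivalent weights $w_i=\alpha_i(x_0)\,e_1'\hat M^{-1}\tilde X_i$ to satisfy $\sum_i|w_i|=O_p(1)$ and $\sum_i w_i^2\to 0$. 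Your conditioning step and your bound $\sum_i\alpha_i^2\to0$ are meant to deliver exactly this, and the paper never checks it for a one-sided design.

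Three points need fixing.

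\emph{The missing ridge term.} For $\lambda>0$ your displayed decomposition is not an identity, because the penalized slope does not absorb the linear Taylor term. The missing piece is
\[
-\lambda\,e_1'\hat M^{-1}\begin{pmatrix}0\\ \nabla\mu(x_0)\end{pmatrix}=\lambda\,\nabla\mu(x_0)'(C+\lambda I)^{-1}(\bar X-x_0),
\]
where $\bar X=\sum_i\alpha_iX_i$ and $C=\sum_i\alpha_i(X_i-\bar X)(X_i-\bar X)'$. This is exactly the paper's third term. Since $\|\lambda(C+\lambda I)^{-1}\|\le 1$, it is bounded by $\|\nabla\mu(x_0)\|\,\|\bar X-x_0\|$. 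It therefore vanishes by your leaf-shrinkage step, whatever the rate of $\lambda$.

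\emph{The density lower bound.} Assumption 8 only bounds $f_X$ from above. You invoke a lower bound twice, for leaf shrinkage and for the half-ball moment bound. That lower bound is not among Assumptions 1--4 and 7--8 and has to be added, e.g.\ the lower bound in Assumption 9 on the $t$-side of $\partial\Omega$. The paper's appeal to leaf shrinkage needs it implicitly as well.

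\emph{The half-ball argument.} A half-ball lower bound after rescaling by the leaf diameter presumes leaves of bounded aspect ratio. Axis-aligned random-split trees do not guarantee this. The quantity that actually matters is $\rho=\|(C+\lambda I)^{-1/2}(\bar X-x_0)\|$. From $w_i=\alpha_i\bigl[1-(\bar X-x_0)'(C+\lambda I)^{-1}(X_i-\bar X)\bigr]$ and Cauchy--Schwarz, $\sum_i|w_i|\le 1+\sqrt{d}\,\rho$ and $\sum_i w_i^2\le 2\max_i\alpha_i\,(1+d\rho^2)$. The quantity $\rho$ is automatically bounded when $\lambda$ does not vanish faster than the squared leaf radius. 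Otherwise your one-sided-design argument must show that $x_0$ stays at bounded Mahalanobis distance, in the metric of $C$, from the weighted centroid. That condition is affine-invariant and so is insensitive to leaf elongation.
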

Once the outcome variable is defined, the fold-level version of $\tau_\Gamma$ (defined as $\tau_{\Gamma_{k,p}}$) will be estimated as follows:\footnote{The superscript $\pm$ is used as shorthand, with $+$ indexing quantities computed for observations with $g_{i,k} \ge 0$ and $-$ indexing quantities computed for observations with $g_{i,k} < 0$.}
\begin{align*}
\hat{\tau}_{\Gamma_{k, p}} & = e_1'(\hat{\beta}_{k,p}^+ - \hat{\beta}_{k,p}^-) \\
\hat{\beta}_{k,p}^\pm & = \argmin_{\beta \in \mathbb{R}^{p+1}} \sum_{i\in I_k} \delta_{i,k}^\pm \left( \hat{Y}_{i,k} - r_p(g_{i,k})\beta \right)^2 K\left( \frac{g_{i,k}}{h_k} \right) \\
\delta_{i,k}^+ & = \mathbbm{1}(g_{i,k} \geq 0), \delta_{i,k}^- = \mathbbm{1}(g_{i,k} < 0)
\end{align*}
where $K(\cdot)$ is some bounded, symmetric kernel function, $r_p(\cdot)$ is a vector of polynomial terms up to order $p$, and $h_k$ is the bandwidth. In this case, $K(\cdot)$ will be an Epanechnikov kernel and $p=1$. Additionally, the multiplier bootstrap approach of \textcite{chiang2019robust} will be used for inference and bandwidth selection, with polynomial order $q=2$ used for bias correction. More specifically, bootstrap estimates will be created as follows:
$$
\phi_{b,q}^\pm = \sum_{i=1}^n u_{i,b} \frac{e_1'(\Gamma_q^\pm)^{-1} (\hat{Y}_{i,k} - r_q(g_{i,k})\hat{\beta}_{k,q}^\pm)r_q\left( \frac{g_{i,k}}{h_k} \right) K\left( \frac{g_{i,k}}{h_k} \right) \delta_{i,k}^\pm}{n h_k \hat{f}_g(0)}
$$
where $u_{i,b}$ are independent and take values in $\{-1, 1\}$ (with equal probability), and $\Gamma_p^\pm$ is defined as follows:
$$
\Gamma_p^+ = \int_0^1 r_p(u) r_p(u)' K(u) du, \quad \Gamma_p^- = \int_{-1}^0 r_p(u) r_p(u)' K(u) du
$$

The variance of $\hat{\tau}_{\Gamma_{k, q}}$ can be computed using the variance of $\phi_{b,q}^+ - \phi_{b,q}^-$. With these components, the result below follows:\footnote{The theorem below can be trivially modified to accommodate two different bandwidths on either side of the cutoff.}

\begin{theorem}
Under assumptions 1-8, the following holds:
$$
\frac{1}{K} \sum_{k=1}^K \hat{\tau}_{\Gamma_{k, p}} \xrightarrow{p} \tau_\Gamma, \quad \frac{ \frac{1}{K} \sum_{k=1}^K \hat{\tau}_{\Gamma_{k, q}}- \tau_\Gamma}{\sqrt{\frac{1}{B-1} \sum_{b=1}^B (\phi_{b,q}^+ - \phi_{b,q}^- - (\bar{\phi}_q^+ - \bar{\phi}_q^-))^2}} \xrightarrow{d} \mathcal{N}(0, 1)
$$
where 
$$
\bar{\phi}_q^\pm = \frac{1}{B} \sum_{b=1}^b \phi_{b,q}^\pm
$$
\end{theorem}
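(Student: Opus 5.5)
The plan is to decompose each fold-level estimator into an infeasible estimator, built from the true sign-normalized outcome $Y_i^* = (2\Gamma(X_i)-1)Y_i$, plus a measurement-error term driven by $\hat{\eta}_i = 2Y_i(\Gamma_{n,-k}(X_i)-\Gamma(X_i))$. Since the local polynomial estimator is linear in the outcome, we can write
\[
\hat{\tau}_{\Gamma_{k,p}} = \tilde{\tau}_{\Gamma_{k,p}} + e_1'\bigl(W_{k,p}^+ - W_{k,p}^-\bigr),
\]
where $\tilde{\tau}_{\Gamma_{k,p}}$ uses $Y_i^*$ and $W_{k,p}^\pm$ is the weighted local polynomial fit of $\hat{\eta}_i$ on $r_p(g_i)$. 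The infeasible part is a standard univariate sharp RD estimator with running variable $g_i$ and outcome $Y_i^*$.

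For $\tilde{\tau}_{\Gamma_{k,p}}$ I verify the conditions of the univariate local polynomial theory and of the multiplier bootstrap of \textcite{chiang2019robust}. Assumption 5 gives a density of $g_i$ bounded away from zero near $0$, and Assumption 6 gives the kernel conditions. Assumptions 3 and 4 give bounded outcomes and conditional variances that are bounded and positive. Smoothness of $\mathbb{E}[Y_i^*\mid g_i=\epsilon]$ on each side of zero follows from Assumptions 7--8 combined with the piecewise smooth boundary in Assumption 1. This step integrates $(2\Gamma(x)-1)\mathbb{E}[Y_i(t)\mid X_i=x]$ over level sets of $g_\Omega$, exactly as in the proof of Theorem 2. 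Two caveats apply: this requires $\Gamma$ to be piecewise constant on the tubes around the boundary, which holds because $\Gamma$ depends only on $\gamma(X_i)$; and kinks contribute only a lower-order set. With these conditions in place, Theorem 2 identifies the limit as $\tau_\Gamma$. Standard results then give $\tilde{\tau}_{\Gamma_{k,p}}\xrightarrow{p}\tau_\Gamma$ for $h_k\to0$ and $nh_k\to\infty$. With the $q=2$ bias correction (undersmoothing relative to order $q$), the studentized statistic is asymptotically $\mathcal{N}(0,1)$, with the bootstrap variance of $\phi^+_{b,q}-\phi^-_{b,q}$ consistent for the asymptotic variance. Averaging over $K$ fixed folds preserves both statements, because folds are disjoint i.i.d.\ subsamples and the averaged influence functions add.

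The main obstacle is showing that the nuisance term is negligible, i.e.\ $e_1'(W_{k,q}^+-W_{k,q}^-) = o_p((nh_k)^{-1/2})$. Cross-fitting makes $\Gamma_{n,-k}$ independent of fold $k$. Conditional on the other folds, $\hat\eta_i$ is therefore an i.i.d.\ bounded sequence, and the term is bounded by a constant times the kernel-weighted local average of $|\Gamma_{n,-k}(X_i)-\Gamma(X_i)|$. Its conditional mean is of order $\mathbb{P}(\Gamma_{n,-k}(X_i)\neq\Gamma(X_i)\mid |g_i|\le h_k)$.

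Bounding that probability is where Lemma 1 enters. Consistency of the local linear forest at boundary points gives consistency of the estimate of $\tau(x_0)$, and hence of its sign wherever $\tau(x_0)\neq0$. Points with $\tau(x_0)=0$ contribute nothing to the bias, because misclassifying them flips the sign of an outcome with zero jump. I would try to make this rigorous by conditioning on the set $\{x:|\tau(x)|>\delta\}$ and letting $\delta\to0$. The convergence is then uniform enough over the boundary, via dominated convergence over $dS_\Omega$, that the misclassification probability tends to zero. The consequence is that the jump in $\mathbb{E}[\hat{Y}_i\mid g_i]$ converges to $\tau_\Gamma$, which yields the consistency claim.

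For the normality claim, however, a rate condition is needed: the misclassification probability near the boundary must be $o((nh_k)^{-1/2})$. I would state this as an implicit high-level requirement, or, more cleanly, derive it from a margin condition, since sign estimation from a consistent estimator converges exponentially fast away from zeros of $\tau$. Finally, I show that the variance contribution of $\hat\eta_i$ vanishes. This makes the bootstrap variance built from $\hat{Y}_i$ asymptotically equivalent to the one built from $Y_i^*$, and Slutsky's theorem completes the argument.
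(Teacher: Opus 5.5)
Your route is the paper's: split $\hat\tau_{\Gamma_{k,p}}$ into the infeasible $Y_i^*$ estimator plus the local polynomial fit of $\hat\eta_i$. You then treat the first piece with univariate RD theory, and remove the second with cross-fitting and a sign-misclassification argument in which zero-effect boundary points drop out. Consistency survives this way: the nuisance term fluctuates at order $(nh_k)^{-1/2}$, and its signed bias vanishes by your dominated-convergence argument. The normality step, however, has a genuine gap. You claim that $e_1'(W^+_{k,q}-W^-_{k,q})=o_p((nh_k)^{-1/2})$, with vanishing variance and a bootstrap variance asymptotically equal to the $Y_i^*$ one. This fails wherever $\tau=0$ on a part of $\partial\Omega$ with positive $f_X\,dS_\Omega$-mass, and in particular under $H_0$, which is the case that governs size. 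There $\Gamma\equiv 1$ while $\Gamma_{n,-k}$ is the sign of an estimate of zero, so $\mathbb{P}(\Gamma_{n,-k}(X_i)\neq\Gamma(X_i)\mid |g_i|\le h_k)$ does not vanish. Three consequences follow: your absolute-value bound yields nothing, the rate condition you propose cannot hold (it would exclude $H_0$), and your zero-jump observation removes only the conditional \emph{mean}. Conditionally on the other folds, $\mathbb{V}(\hat\eta_i\mid X_i)=4\,\mathbbm{1}\{\Gamma_{n,-k}(X_i)\neq\Gamma(X_i)\}\,\mathbb{V}(Y_i\mid X_i)$, which Assumption 4 keeps away from zero on the misclassified set. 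Hence the nuisance term has standard deviation of exact order $(nh_k)^{-1/2}$, not smaller. Likewise, $\hat Y_i^2=(Y_i^*)^2$, but $\mathbb{E}[\hat Y_i\mid g_i=0^\pm]\neq\mathbb{E}[Y_i^*\mid g_i=0^\pm]$ whenever the estimated signs flip on a part of the boundary where $\mathbb{E}[Y_i\mid X_i]\neq 0$. So the two bootstrap variances have different limits. The paper's proof has the same hole: it bounds only $\mathbb{E}[G_p'W_{+,k}\eta]$ and then asserts convergence in probability.

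The fix is to drop the infeasible benchmark. Condition on the out-of-fold sample and treat $\hat Y_i$ itself as the outcome, so that $\{(\hat Y_i,g_i)\}_{i\in I_k}$ is conditionally i.i.d. Because $\Gamma_{n,-k}(X_i)$ is the estimated sign at $\gamma(X_i)$, the level-set argument you already use shows that $\epsilon\mapsto\mathbb{E}[\hat Y_i\mid g_i=\epsilon]$ is smooth on each side of zero, with jump
\[
\tau_\Gamma+\frac{2}{f_{g_\Omega}(0)}\int_{\partial\Omega}\bigl(\Gamma_{n,-k}(x)-\Gamma(x)\bigr)\tau(x)f_X(x)\,dS_\Omega .
\]
The integrand is nonpositive and vanishes identically where $\tau=0$. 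Under $H_0$ this remainder is therefore identically zero and no rate is needed; off the null, your margin condition is precisely what makes it $o_p((nh_k)^{-1/2})$. You are right that consistency of the forest gives no rate by itself. The paper's Markov step with an ``arbitrary'' $q$ does not supply one either, since any $q$ that works already presupposes the rate. Since $\mathbb{V}(\hat Y_i\mid X_i)=\mathbb{V}(Y_i\mid X_i)$, the conditional local polynomial CLT applies directly, as does the multiplier bootstrap, which is computed from $\hat Y_i$ residuals anyway. A conditional $\mathcal{N}(0,1)$ limit for the studentized fold statistic then yields the unconditional one. Finally, your step that ``the averaged influence functions add'' is no longer automatic. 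Fold $k$'s influence function depends on the other folds through $\Gamma_{n,-k}$, so combining folds requires a joint argument rather than independence across folds.
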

An important observation is that asymptotic normality holds even though the local linear forest estimator converges more slowly than the second-stage local polynomial regression. The key reason is that, when a true discontinuity exists, inference depends only on correctly determining the signs of the treatment effects. Because the data lie on a bounded support, the probability of sign errors can be made arbitrarily small, regardless of the slower convergence rate of the first-stage estimator. As a result, the asymptotic properties of the standard local polynomial regression remain valid.

\subsection{Density Test}
Estimation and inference for the aggregated univariate density are carried out using a bias-corrected kernel density estimator, with sampling variability assessed via a multiplier bootstrap. A standard kernel density estimator is given as follows:
$$
\hat{f}_{g^*}(0)^+ - \hat{f}_{g^*}(0)^- =  \sum_{i=1}^n \left( \frac{1}{nh_k} \right) K\left( \frac{\hat{g}_{i,k}}{h_k} \right) (2\delta_i-1)
$$
where $K(\cdot)$ is a symmetric and bounded kernel function, and $\delta_i = \mathbbm{1}(\hat{g}_{i,k} \geq 0)$. Since the estimator is known to exhibit boundary bias, an approach similar to \textcite{hazelton2009linear} is adopted, whereby the base kernel is adjusted to achieve unbiasedness at the boundary. The resulting estimator is given by:
$$
\hat{f}_{g^*}(0)^+ - \hat{f}_{g^*}(0)^- =  \sum_{i=1}^n \left( \frac{1}{nh_k} \right) \left(\alpha_1 + \alpha_2 \left| \frac{\hat{g}_{i,k}}{h_k} \right| \right) K\left( \frac{\hat{g}_{i,k}}{h_k} \right)  (2\delta_i-1)
$$
where $\alpha_1, \alpha_2$ solves the following system of equations:
$$
\begin{pmatrix} \int_0^1 K(u) du & \int_0^1 u K(u) du \\ \int_0^1 u K(u) du & \int_0^1 u^2 K(u) du   \end{pmatrix} \begin{pmatrix} \alpha_1 \\ \alpha_2 \end{pmatrix}  = \begin{pmatrix}  1 \\ 0  \end{pmatrix}
$$
With this construction of $\alpha_1$ and $\alpha_2$, the boundary bias inherent in the standard kernel density estimator is eliminated, as the modified kernel integrates to one on the relevant side of the cutoff. These coefficients also remove the first-order bias, yielding behavior analogous to that of a kernel density estimator evaluated at an interior point. The optimal bandwidth is selected using the following MSE-optimal expression:
$$
h_k = \left( \frac{V_k}{B_k^2} \right)^\frac{1}{5}
$$
where $V_k$ is the variance term and $B_k$ is the bias term. Their exact expressions are given as follows:
\begin{align*}
V_k & = \frac{1}{n} \left(  \hat{f}_g(0)^+ \int_0^1  (\alpha_1 + \alpha_2 u)^2K(u)^2 du + \hat{f}_g(0)^- \int_{-1}^0  (\alpha_1 - \alpha_2 u)^2K(u)^2 du \right) \\
B_k & = \left(\int_0^1 u^2 (\alpha_1 + \alpha_2 u)K(u) du \right) \hat{f}_g''(0)^+ - \left(\int_{-1}^0 u^2 (\alpha_1 - \alpha_2 u)K(u) du \right) \hat{f}_g''(0)^-
\end{align*}
Here, the density and second derivative of the density will be estimated via preliminary local polynomial regressions (\textcite{cattaneo2020simple, cattaneo2024local}). For consistency of this aggregated estimator, a variation of the random forest density estimator used to construct $\hat{g}_{i,k}$ must be consistent. When joint density estimation is conducted at boundary points, the volume of each cell generated by the partitioning algorithm of \textcite{wen2022random} must be adjusted, particularly when the support of the running variables is non-rectangular. The effective volume of a cell is obtained by multiplying its original volume by the proportion of the cell that lies within the support of the running variables. This proportion is approximated by uniformly sampling points within the cell and computing the fraction that fall inside the support. With that, the following lemma holds:
\begin{lemma}
Without loss of generality, let $f_X(x | x \in \Omega)$ be the density of the running vector $X$ within the region $\Omega$. Also, let $\hat{f}_X(x | x \in \Omega)$ be the random forest density estimator of $f_X(x | x \in \Omega)$. Under assumptions 1-3 and 9, the random forest density estimator is consistent for $f_X(x | x \in \Omega)$.
\end{lemma}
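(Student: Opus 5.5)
The plan is to reduce Lemma 2 to the consistency theory for random forest density estimators in \textcite{wen2022random}, and to show that the volume adjustment near the boundary introduces only an asymptotically negligible error. Write the estimator at $x$ as an average over trees, $\hat{f}_X(x \mid x \in \Omega) = \frac{1}{M}\sum_{m=1}^M \frac{N_m(x)}{n_\Omega \, \widetilde{\mathrm{vol}}(A_m(x))}$. Here $A_m(x)$ is the cell of tree $m$ containing $x$, and $N_m(x)$ is the number of observations in $\Omega$ that fall in that cell. The effective volume is $\widetilde{\mathrm{vol}}(A) = \mathrm{vol}(A)\hat{\pi}(A)$, where $\hat{\pi}(A)$ is the Monte Carlo fraction of the cell lying in the support. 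Because $x$ may be a boundary point, write $\pi(A) = \mathrm{vol}(A \cap \Omega)/\mathrm{vol}(A)$ for the true fraction.

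The argument has three pieces, and the first is a bias step. By Assumption 9, $f_X$ is continuous on $\Omega$ near $\partial\Omega$, with continuity taken one-sidedly from within $\Omega$. If $\mathrm{diam}(A_m(x)) \to 0$, then
\[
\frac{P(X \in A_m(x) \mid X \in \Omega)}{\mathrm{vol}(A_m(x) \cap \Omega)} \to f_X(x \mid x \in \Omega).
\]
The denominator uses the true effective volume, which is exactly why the correction is needed. Without it, at a boundary point the ratio would converge to roughly $\pi f$ rather than $f$. I would take from the splitting rule of \textcite{wen2022random} that cell diameters shrink in probability. The random splits are independent of the data and each coordinate is split with positive probability, so the number of splits $p_n \to \infty$ gives this. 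Under Assumption 1 the piecewise smooth boundary ensures $\pi(A_m(x))$ stays bounded away from zero along shrinking cells, except at kinks with vanishing angle, which I would exclude or handle by a cone condition.

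Next comes a variance step. I would condition on the partition, which is data-independent. Then $N_m(x)$ is Binomial with success probability $p_A = P(X \in A \mid \Omega)$. By Assumptions 2 and 3, the relative error of $N_m/(n_\Omega p_A)$ is $O_p((n p_A)^{-1/2})$, and $n p_A \asymp n\,\mathrm{vol}(A) \to \infty$ provided the depth grows slowly, for example $2^{p_n}/n \to 0$. Averaging over trees only reduces the variance. Finally, $n_\Omega/n \to P(\Omega) > 0$ by the law of large numbers. The last piece is the Monte Carlo error: with $S$ uniform draws per cell, $\hat\pi(A) - \pi(A) = O_p(S^{-1/2})$, and since $\pi$ is bounded below, the ratio $\pi/\hat\pi \to 1$ as $S \to \infty$. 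Combining the three pieces by Slutsky gives the result.

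The main obstacle is the bias step at boundary points, specifically showing that $\pi(A_m(x))$ is bounded below uniformly for the random axis-aligned cells. A cell can shrink anisotropically, becoming thin and elongated, so that its intersection with $\Omega$ is a vanishing fraction. I would first try to control the aspect ratio using the balanced splitting probabilities in \textcite{wen2022random}, which make the number of splits per coordinate concentrate around $p_n/d$. Combined with local flatness of $\partial\Omega$ at smooth points, this should give $\pi(A) \ge c > 0$ with probability tending to one.
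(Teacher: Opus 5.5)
Your architecture is the paper's. The paper also separates a population (bias) term from a sampling term:
\begin{itemize}
\item The bias term vanishes because cell diameters shrink as $p_n\to\infty$. The paper imports the $C^{1,\alpha}$ bound of \textcite{wen2022random}, where you use only one-sided continuity.
\item The sampling term is bounded by $\mathbb{E}[c_f/(n\mu(A))]$. It vanishes because $n\mu(A)=n(2r)^d2^{-p_n}V_j\to\infty$ for $p_n=d(1+d\ln 2)^{-1}\ln N$, a choice that satisfies your $2^{p_n}/n\to 0$.
\end{itemize}
Your treatment of $n_\Omega/n$ and of the Monte Carlo error in $\hat\pi$ goes beyond what the paper writes.

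The genuine gap is the step you flag yourself, and the paper does not close it either. It simply declares $V_j\in[v_{\min},1]$, which does not follow from Assumptions 1--3 and 9. Your proposed derivation of $\pi(A_m(x))\ge c$ with probability tending to one cannot work, because that statement is false even at smooth boundary points.
\begin{itemize}
\item The split counts $(k_1,\dots,k_d)$ along the path to $x$ are multinomial$(p_n;1/d,\dots,1/d)$. They concentrate around $p_n/d$ only in ratio, with fluctuations of order $\sqrt{p_n}$. So the aspect ratios $2^{|k_l-k_m|}$ diverge in probability instead of staying bounded.
\item More fundamentally, bounded aspect ratio would not help. With midpoint splits, the cell containing $x$ is the dyadic box determined by $(k_1,\dots,k_d)$, and $\pi$ depends on where $x$ sits inside that box. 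Suppose that near $x$ the boundary is the flat piece $\{x_1=a\}$ with $\Omega=\{x_1\ge a\}$. Then $\pi(A)=1-\{2^{k_1}(a+r)/(2r)\}$, the braces denoting the fractional part. For almost every $a$ these fractional parts equidistribute as $k_1$ ranges over its window. Hence $\mathbb{P}(\pi(A)<c)$ does not tend to zero for any fixed $c>0$, and no kink is involved.
\end{itemize}

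It helps to see where the bound is actually used. Your bias step does not need it: the ratio is an average of $f_X(\cdot\mid x\in\Omega)$ over $A\cap\Omega$, which lies within $\mathrm{diam}(A)$ of $x$. The bound enters silently in $np_A\asymp n\,\mathrm{vol}(A)$ and in $\pi/\hat\pi\to 1$.

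To finish, you have two options. One is to state $\pi\ge v_{\min}$ as an explicit extra assumption, which is what the paper's proof amounts to. The other is to weaken what you need, for example by working in $L^1$ instead of $L^2$. Let $\bar f_A$ be the average of $f_X(\cdot\mid x\in\Omega)$ over $A\cap\Omega$. Conditional on the partition, a tree estimate $Z$ with cell $A$ satisfies $\mathbb{E}|Z-\bar f_A|\le\min\{2\bar f_A,(\bar f_A/(n_\Omega\,\mathrm{vol}(A\cap\Omega)))^{1/2}\}$. By bounded convergence and the triangle inequality across trees, two conditions then suffice: $n\,\mathrm{vol}(A_m(x)\cap\Omega)\to\infty$ in probability, and the number of Monte Carlo draws times $\pi$ diverges.

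That weaker condition still needs an argument about the position of $x$ relative to the dyadic grid. It is plausible for almost every boundary point, and it can be forced by randomly shifting $B_r$. It does not, however, follow from local flatness plus aspect-ratio control.
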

As with the treatment effect heterogeneity case, a K-fold cross-fitting procedure is used to separate the estimation of $\hat{g}_{i,k}$ and the final test statistic. Also, as with other kernel-based nonparametric estimators, the MSE-optimal bandwidth cannot be used directly for inference because it does not adequately control for bias. To address this issue, the kernel bias-correction procedure of \textcite{calonico2018effect} is employed. Specifically, inference is conducted using the following re-centered estimator:
$$
\frac{1}{n} \sum_{i=1}^n \frac{1}{h_k} \left[ \left(\alpha_1 + \alpha_2 \left| \frac{\hat{g}_{i,k}}{h_k} \right| \right) K\left( \frac{\hat{g}_{i,k}}{h_k} \right) - L^{(2)}\left( \frac{\hat{g}_{i,k}}{h_k} \right) \int_0^1 \frac{u^2 (\alpha_1 + \alpha_2 u)K(u)}{2!} du \right]  (2\delta_i-1)
$$
where
$$
L(u) = 30(1-|u|)^2 u^2  \mathbbm{1}(|u| \leq 1)
$$
Here, $L(u)$ is designed to estimate the curvature of the given density at the boundary point $g = 0$. With that, the following will be used to generate bootstrap replications to compute standard errors:
\begin{align*}
\phi_b^\pm &= \sum_{i=1}^n \frac{u_i^{(b)}}{n h_k} \bigg[\left(\alpha_1 + \alpha_2 \left| \frac{\hat{g}_{i,k}}{h_k} \right| \right) K\left( \frac{\hat{g}_{i,k}}{h_k} \right)  - h_k \hat{f}_g(0)^\pm  \\
& -  \left( L^{(2)}\left( \frac{\hat{g}_{i,k}}{h_k} \right) - h_k \hat{f}_g''(0)^\pm \right) \int_0^1 \frac{u^2 (\alpha_1 + \alpha_2 u)K(u)}{2!} du  \bigg](2\delta_i-1)
\end{align*}
where 
$$
\hat{f}_g(0)^\pm = \begin{cases}
    \hat{f}_g(0)^+, \quad \delta_i = 1 \\
    \hat{f}_g(0)^-, \quad \delta_i = 0
\end{cases}, \quad \hat{f}_g''(0)^\pm = \begin{cases}
    \hat{f}_g''(0)^+, \quad \delta_i = 1 \\
    \hat{f}_g''(0)^-, \quad \delta_i = 0
\end{cases}
$$

Now, the result below follows:\footnote{The theorem below can be trivially modified to accommodate two different bandwidths on either side of the cutoff.}
\begin{theorem}
Under assumptions 1-3, 5-7 and 9, the following result holds:
$$
\frac{1}{K} \sum_{k=1}^K \hat{\tau}_{\Lambda_k} \xrightarrow{p} \tau_\Lambda^f, \quad \frac{\frac{1}{K} \sum_{k=1}^K (\hat{\tau}_{\Lambda_k} - \mathcal{B}_k) - \tau_\Lambda^f}{\sqrt{\frac{1}{B-1} \sum_{b=1}^B (\phi_b^+ - \phi_b^- - (\bar{\phi}^+ - \bar{\phi}^-))^2}} \xrightarrow{d} \mathcal{N}(0, 1)
$$
where $\mathcal{B}_k$ is the estimated bias from fold $k$.
\end{theorem}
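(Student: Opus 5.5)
The proof splits the feasible statistic into an oracle piece, built from the true $g_i^* = (2\Lambda_i - 1)g_i$, and a cross-fitting remainder coming from estimating $\Lambda$. Write $\hat{\tau}_{\Lambda_k}$ for the boundary-adjusted kernel estimator computed on fold $I_k$ using $\hat g_{i,k} = (2\Lambda_{n,-k}(X_{i,k})-1)g_{i,k}$. Write $\tilde\tau_{\Lambda_k}$ for the same estimator computed with $g^*_{i,k}$. Then
\[
\hat{\tau}_{\Lambda_k} - \tau^f_\Lambda = (\tilde\tau_{\Lambda_k} - \tau_\Lambda^f) + (\hat{\tau}_{\Lambda_k} - \tilde\tau_{\Lambda_k}).
\]
The first term is a univariate boundary kernel density problem for the density of $g^*$ at zero. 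By Theorem 3 and the rewriting that follows it, the jump in this density at zero equals $\tau^f_\Lambda$. The plan is to show that the first term is consistent and, after bias correction, asymptotically normal. The second term should be $o_p((nh_k)^{-1/2})$.

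For the oracle term, I would take $f_{g^*}$ to be twice continuously differentiable on each side of zero. This follows from Assumptions 1 and 9 together with the change-of-variables formula $f_{g}(g)=\int_{g_\Omega(x)=g} f_X\,dS_\Omega$, using smoothness of level sets of the signed distance near a piecewise smooth boundary. Assumption 5 keeps $f_{g^*}$ bounded away from zero. A Taylor expansion inside $\int_0^1(\alpha_1+\alpha_2u)K(u)f_{g^*}(hu)\,du$ uses the moment conditions defining $\alpha_1,\alpha_2$ (integral one, first moment zero) to give bias $h^2 B_k + o(h^2)$, and the same holds on the negative side. The variance is $V_k/h$ to leading order. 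Consistency then follows from $h_k\to 0$ and $nh_k\to\infty$, which the MSE-optimal rate $h\asymp n^{-1/5}$ satisfies.

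For normality I would use the bias-corrected statistic with the $L^{(2)}$ correction term. Following \textcite{calonico2018effect}, subtracting $\mathcal{B}_k$ removes the $h^2$ bias, leaving bias of order $o(h^2)$ relative to a standard deviation of order $(nh)^{-1/2}$. The correction adds variance of the same order, and the bootstrap accounts for it. Lyapunov's CLT applies to the i.i.d. triangular array of bounded kernel summands, using Assumptions 2, 3 and 6. Averaging over a fixed number $K$ of folds preserves normality.

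Bootstrap validity then comes from showing that the conditional variance of $\phi^+_b-\phi^-_b$ given the data is consistent for the variance of the recentred statistic. The Rademacher weights have mean zero and variance one, and the centering by $h_k\hat f_g(0)^\pm$ and $h_k\hat f_g''(0)^\pm$ uses consistent pilot estimates, so this reduces to a law of large numbers for squared kernel terms. As $B\to\infty$ the sample variance over bootstrap draws converges to that conditional variance, and Slutsky's lemma finishes the argument.

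The main obstacle is the remainder $\hat{\tau}_{\Lambda_k}-\tilde\tau_{\Lambda_k}$. Because of cross-fitting, $\Lambda_{n,-k}$ is independent of fold $k$. The two statistics differ only for observations with $\Lambda_{n,-k}(X_i)\neq\Lambda(X_i)$, for which $\hat g_{i,k}=-g^*_{i,k}$. So the remainder is bounded by $(nh)^{-1}\sum_i C\,\mathbb{1}(|g_i|\le h)\mathbb{1}(\Lambda_{n,-k}(X_i)\ne\Lambda(X_i))$. Its conditional expectation is of order $\mathbb{P}(\text{sign error}, |g_i|\le h)/h$.

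Lemma 2 gives pointwise consistency of the random forest density estimator. I would upgrade this to uniform consistency over the compact boundary, using bounded support and Assumption 9. On the set where $|\tau_f(x)|>c$, sign errors then have probability tending to zero, exponentially fast if a concentration bound for the forest is available. Where $\tau_f(x)=0$, a sign flip changes nothing, because the two one-sided densities agree there and the flipped contributions cancel in expectation.

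The delicate region is $0<|\tau_f|\le c$. There a sign error costs at most $|\tau_f(x)|\le c$ per unit of boundary mass, so the bias contribution is at most $c$ times the boundary mass. I would let $c=c_n\to0$ slowly enough that uniform consistency still controls the region $|\tau_f|>c_n$. A stochastic-equicontinuity argument conditional on the training folds controls the variance part of the remainder. The hardest step is showing that this bias is $o((nh)^{-1/2})$ rather than merely $o(1)$. That may require either an additional margin condition near $\tau_f=0$ or an honest appeal to the argument sketched after Theorem 4 that sign errors vanish. I would try the margin-condition route first.
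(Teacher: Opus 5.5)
The kernel-level parts of your plan are fine. The problem is the oracle--remainder split on $\{x\in\partial\Omega:\tau_f(x)=0\}$, which under $H_0$ is the whole boundary.

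\textbf{The remainder is not $o_p((n_kh_k)^{-1/2})$ there.} On this set $\Lambda\equiv 1$, but $\Lambda_{n,-k}$ need not converge to it: the sign of a noisy estimate of zero is essentially a coin flip. You are right that the flipped contributions cancel in mean, but they do not cancel in variance. Conditional on the training folds, $\hat\tau_{\Lambda_k}-\tilde\tau_{\Lambda_k}$ is an average of i.i.d.\ terms of size $h_k^{-1}$, supported on $\{|g_i|\le h_k,\ \Lambda_{n,-k}(X_i)\neq\Lambda(X_i)\}$. Its variance is of order $(n_kh_k)^{-1}\int_{\partial\Omega\cap\{\tau_f=0\}}\mathbbm{1}(\Lambda_{n,-k}\neq\Lambda)\,f_X\,dS_\Omega$, which is the same order as the oracle's variance. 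Stochastic equicontinuity would need the $L^2$ distance between feasible and oracle summands to vanish, and here it does not. So the remainder is $O_p((n_kh_k)^{-1/2})$ but not $o_p$, and normality of $\tilde\tau_{\Lambda_k}$ cannot be transferred by Slutsky. Consistency does survive.

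\textbf{The fix is to drop the oracle,} which is what the paper does in effect. Condition on the training folds and run your Lyapunov and bootstrap arguments on the feasible summands directly. Given $\Lambda_{n,-k}$ they are i.i.d. Because $K$ and $L$ are symmetric and $(2\delta_i-1)^2=1$, their second moments do not depend on the labels. The labels enter only through the mean, which the paper computes via $S_1,S_0$ with $\mathbb{P}(\Lambda_n(X)=1\mid X)$ inside the integrand. That integrand vanishes identically where $f_X^+=f_X^-$. What remains is the bias $-2\int_{\partial\Omega\cap\{\Lambda_{n,-k}\neq\Lambda\}}|\tau_f|\,dS_\Omega$, up to $o(h_k^2)$. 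In particular this bias is zero under $H_0$, so size control needs nothing from the first stage.

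\textbf{The remaining bias is exactly your delicate region, and your suspicion is justified.} The paper disposes of it with Markov's inequality for an arbitrary non-decreasing $q$, concluding that $\mathbb{P}(\tau_n(X)<0\mid X)$ decays faster than any $n_k^{-a}$. That step supplies no rate, for two reasons.
\begin{itemize}
\item Since $q(|Z|)\ge q(t)\mathbbm{1}(|Z|\ge t)$, the Markov bound is never smaller than the true tail probability. Any rate must therefore come from a concentration property of the forest density estimator, and Lemma 2 does not give one.
\item Even super-polynomial decay at each fixed $x$ with $\tau_f(x)\neq0$ is not uniform as $\tau_f(x)\to0$. Suppose $\tau_f$ crosses zero transversally, as in the paper's own density DGP 1, where $\tau_f=(2x_2-1)/3$ along $x_1=0$. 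If the first-stage error is of order $\epsilon_n$, sign errors are non-negligible on a boundary set of measure of order $\epsilon_n$ on which $|\tau_f|\lesssim\epsilon_n$. The bias is then of order $\epsilon_n^2$, so you need $\epsilon_n=o((n_kh_k)^{-1/4})$.
\end{itemize}
A margin condition plus a first-stage rate, as you propose, is therefore a genuine additional hypothesis for the centred normality claim under alternatives. Neither your sketch nor the paper's argument establishes that claim from Assumptions 1--3, 5--7 and 9 alone.
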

As in the treatment effect heterogeneity setting, the asymptotic results for the bias-corrected kernel density estimator continue to hold. This is because the probability of assigning an incorrect sign to the estimated discontinuity can be driven to zero as the sample size grows, ensuring that such errors do not affect the limiting distribution.

\subsection{Accounting for Random Split Selection}
Asymptotically, the specific split used in the $K$-fold cross-fitting procedure is inconsequential. In finite samples, however, different splits can lead to non-negligible variation in both point estimates and standard errors. To account for this variability, the $K$-fold cross-fitting procedure is repeated $S$ times, and the resulting main estimators are averaged across the $S$ splits. For inference, influence functions are computed for each split; multiplier weights are then applied to the average influence function, with the weights for each observation $i$ held fixed across splits. This construction parallels the clustered multiplier bootstrap extension of \textcite{chiang2019robust} and incorporates the randomness of the sample split into the standard error calculation.

\section{Simulations}
This section illustrates the implementation of the treatment effect heterogeneity test and the density manipulation tests using simulated data. For each case, two data-generating processes (DGPs) are considered: one that is discontinuous almost everywhere, and one that is continuous everywhere.

\subsection{Treatment Effect Heterogeneity Simulations}
The first DGP will be constructed as follows:
$$
Y = \frac{1}{3} \begin{cases}
X_1 + X_2, \quad X_1, X_2 \in [0, 1] \\
1 - X_1, \quad X_1 \in [0, 1], X_2 \in [-1, 0] \\
1 - X_2, \quad X_1 \in [-1, 0], X_2 \in [0, 1] \\
1, \quad X_1, X_2 \in [-1, 0]
\end{cases} + \epsilon
$$
where $\epsilon \sim N(0, \sigma_\epsilon^2)$, $X_1, X_2$ take values in $[-1, 1]$, and $f_X(x) = 0.25$ with $X_1 \indep X_2$. Here, $\tau_\Gamma = \frac{1}{6}$. Below is a plot of this function: 

\begin{figure}[H]
    \centering
    \includegraphics[scale=0.7]{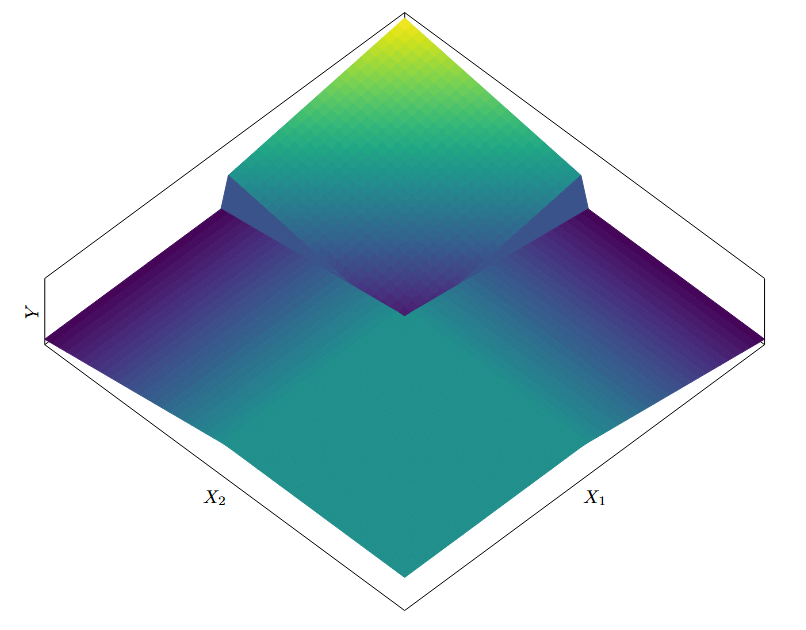}
    \caption{Graph of the DGP 1 function. Darker colors refer to lower values and lighter colors refer to higher values.}
\end{figure}

For the second DGP, the data will be generated as follows:
$$
Y = \frac{X_1 + X_2}{3} + \epsilon
$$
where $X_1$, $X_2$, and $\epsilon$ are constructed in the same way as in the first DGP. For each of these DGPs, the following parameters are used: sample size $n = 1000$, noise variance $\sigma_{\epsilon}^2 = 0.05$, number of folds $K = 2$, number of splits $S \in \{1,10,20\}$, and 5{,}000 simulation replications. Also, two different coverage-error optimal bandwidths are used on either side of the one-dimensional cutoff. The resulting simulation outcomes are presented below.

    \begin{table}[H]
    \centering
    \caption{Simulation results for treatment effect heterogeneity test using DGP 1 and DGP 2.}
    \begin{tabular}{lcc}
    \toprule
    & \textbf{DGP 1} & \textbf{DGP 2} \\
    \midrule
    \textbf{Bias} \\
    \quad Splits = 1 & -0.0295 & -0.0003 \\
    \quad Splits = 10 & -0.0295 & -0.0016 \\
    \quad Splits = 20 & -0.0294 & -0.0017 \\
    \addlinespace
    \textbf{Standard Errors} \\
    \quad Splits = 1 & 0.0895 & 0.0902 \\
    \quad Splits = 10 & 0.0743 & 0.0539 \\
    \quad Splits = 20 & 0.0734 & 0.0510 \\
    \addlinespace
    \textbf{Rejection Rate} \\
    \quad Splits = 1 & 0.5252 & 0.0778 \\
    \quad Splits = 10 & 0.6412 & 0.0774 \\
    \quad Splits = 20 & 0.6518 & 0.0762 \\
    \bottomrule
    \textbf{Rejection Rate (Local Polynomial Regression from Table 1)} \\
    \quad $n = 1,000$ & & 0.3652 \\
    \quad $n = 20,000$ & & 0.0748 
    \end{tabular}
    \end{table}

These results reveal several patterns. First, the bias under the first DGP exceeds that under the second. This occurs because, in the second DGP, the sign of the outcome is immaterial in the absence of discontinuities. In contrast, under the first DGP, misclassification of the treatment effect sign has a non-trivial influence on the test statistic, leading to greater bias. Consequently, DGPs featuring discontinuities are expected to exhibit slightly larger bias. This bias does not seem to change on average as the number of splits increases. \\

Second, the standard errors exhibit substantial sensitivity to the number of splits. For both DGPs, the standard error is approximately 0.09 with a single split, but decreases markedly when the number of splits is increased to 10. Although the initial reduction is pronounced, further increases in the number of splits yield progressively smaller declines. For the discontinuous DGP, this reduction improves the rejection rate, whereas the rejection rate remains relatively stable for the continuous DGP.

\subsection{Density Manipulation Simulations}
For the first density test example, the following bivariate density will be used:
$$
f_X(X_1, X_2) = \frac{1}{3} \begin{cases}
X_1 + X_2, \quad X_1, X_2 \in [0, 1] \\
1 - X_1, \quad X_1 \in [0, 1], X_2 \in [-1, 0] \\
1 - X_2, \quad X_1 \in [-1, 0], X_2 \in [0, 1] \\
1, \quad X_1, X_2 \in [-1, 0]
\end{cases}
$$
Note that this is the same function that was used in DGP 1 for the boundary heterogeneity simulations. Here, $\tau_{\Lambda}^f = 1/3$. For the second density, the following joint density will be used:
$$
f_X(X_1, X_2) = 1/4
$$
where $X_1, X_2$ are independent and take values on $[-1, 1]$. Here, the first density is discontinuous almost everywhere along the treatment boundary, whereas the second density is continuous everywhere. For bandwidth selection, a single MSE optimal bandwidth is used on either side of the one-dimensional cutoff. For $n = 1000$, $K = 2$ and $S = 1,10,20$, notice the following simulation results: \\

\begin{table}[H]
\centering
\caption{Simulation results for density test using DGP 1 and DGP 2.}
\begin{tabular}{lcc}
\toprule
& \textbf{DGP 1} & \textbf{DGP 2} \\
\midrule
\textbf{Bias} \\
\quad Splits = 1 & -0.035 & -0.002 \\
\quad Splits = 10 & -0.036 & -0.001 \\
\quad Splits = 20 & -0.035 & -0.002 \\
\addlinespace
\textbf{Standard Errors} \\
\quad Splits = 1 & 0.135 & 0.167 \\
\quad Splits = 10 & 0.111 & 0.095 \\
\quad Splits = 20 & 0.109 & 0.090 \\
\addlinespace
\textbf{Rejection Rate} \\
\quad Splits = 1 & 0.721 & 0.059 \\
\quad Splits = 10 & 0.862 & 0.063 \\
\quad Splits = 20 & 0.874 & 0.061 \\
\bottomrule
\end{tabular}
\end{table}

In this scenario, the simulation results exhibit a pattern similar to that observed in the treatment effect heterogeneity case. Specifically, the bias is slightly greater under the discontinuous DGP, and the standard errors decrease up to a certain point as the number of splits increases.

\section{Application}
The empirical application in this paper follows the setting of \textcite{frey2019cash}. In 2003, the conditional cash transfer (CCT) program \textit{Bolsa Família} (BF) was launched across Brazil. Beyond serving as a major source of income stabilization for extremely poor households, there is substantive speculation that BF also reduced the scope for political influence by diminishing incentives to engage in clientelism.\footnote{Here, clientelism refers to the act of "...replacing public good distribution with private transfers targeting groups or individual voters" (\textcite{frey2019cash}).} To evaluate this claim, \textcite{frey2019cash} exploit cross-municipal variation in CCT coverage induced by the Family Health Program (FHP). Beginning in August 2004, municipalities with fewer than 30,000 residents and a Human Development Index (HDI) below 0.7 became eligible for a 50\% increase in FHP funding. This expansion influenced BF coverage by improving households’ access to information about the program. According to a survey of 10,000 poor households, more than 10\% of BF beneficiaries learned about the program through their family doctor. Consequently, the political effects of BF can be identified using variation in BF coverage generated by the FHP expansion through an instrumental variables or regression discontinuity analysis. \\

In the original \textcite{frey2019cash} study, the author implements a multivariate regression discontinuity design along the FHP funding threshold to assess the strength of the first stage and to estimate the effects of BF on various political outcomes. However, with only a few thousand municipalities, a full boundary-based treatment effect analysis may lack reliability due to limited effective sample size near each cutoff point. To evaluate the robustness of the original findings, the methods of this paper will be applied to test for the presence of any treatment effects along the FHP boundary for the same outcomes examined in \textcite{frey2019cash}. For simplicity, only municipalities from 2008 are included in this analysis. \\

For the first set of results, the outcomes will be health funds received and CCT coverage. These outcomes measure the extent to which the intervention affects the type of treatment that was received. The first column reports the weighted average absolute treatment effect, with p-values in brackets.  The second column presents the weighted average treatment effect based on simple signed distances (as in Theorem 1). The third column indicates whether a discontinuity is detected using the uniform bands of \textcite{cattaneo2025estimation}. The fourth column reports whether both positive and negative discontinuities are detected using the local polynomial regression approach of \textcite{cattaneo2025estimation}. All running variables are standardized using the standard deviation.

\begin{table}[H]
\centering
\caption{Treatment effect heterogeneity test for first-stage outcomes.}
\label{tab:rdd_results}
\begin{tabular}{lcccc}
\toprule
\textbf{Variable} & \textbf{Est.} & \textbf{Est. (Dist.)} & \textbf{LPR Disc.} & \textbf{Double Disc.} \\
\midrule
Health Funds, R\$mn  & 0.352 & 0.196 & No & No \\
                     & [0.714] & [0.436] &  &  \\
CCT Coverage, p.p.   & 3.217 & 4.036 & No & No \\
                     & [0.426] & [0.373] &  &  \\
\bottomrule
\end{tabular}

\vspace{0.5em}
\footnotesize\textit{Note:} P-values are reported in square brackets.
\end{table}

Unlike \textcite{frey2019cash}, the analysis presented here does not uncover evidence of a strong first stage. This conclusion holds both when applying the uniform confidence bands of \textcite{cattaneo2025estimation} and when using the aggregation methods developed in this paper. Together, these results indicate that the available data do not provide sufficient support for a strong first-stage effect. At the same time, there may be scope to improve the power of these tests without compromising size. As noted by \textcite{crippa2025manipulation, wong2013analyzing}, the scaling of the running variables can meaningfully affect the performance of distance-based aggregators. Intuitively, rescaling alters the relative emphasis placed on different segments of the boundary. A common remedy is to standardize each running variable so that all coordinates receive equal weight. However, there is no reason to expect equal weighting to be optimal for power. For instance, if discontinuities occur primarily along one portion of the boundary, assigning greater weight to that region may deliver more powerful tests than treating all boundary segments symmetrically. One possible way to address this issue is to treat the scaling parameters as tuning choices and select them to maximize power. Although this extension lies beyond the scope of this paper, it may improve the power of the proposed test and potentially reveal first-stage discontinuities that remain undetected under equal weighting. \\

For the second set of results, the analysis will use political outcomes that will help determine whether BF had any impact on clientelism in Brazil. 

\begin{table}[H]
\centering
\caption{Treatment effect heterogeneity test for main political outcomes.}
\label{tab:rdd_main_results}
\begin{tabular}{l c c c c}
\toprule
\textbf{Variable} & \textbf{Est.}  & \textbf{Est. (Dist.)} & \textbf{LPR Disc.} & \textbf{Double Disc.} \\
\midrule
Incumbent's Vote Share (\%) & \makecell{4.408 \\ \text{[0.135]}} &  \makecell{-2.163 \\ \text{[0.528]}} & No & No \\
Margin of Victory (p.p.) & \makecell{0.039 \\ \text{[0.288]}} & \makecell{-4.650 \\ \text{[0.248]}} & No & No \\
Candidates (Number) & \makecell{-0.059 \\ \text{[0.663]}} & \makecell{-0.056 \\ \text{[0.531]}} & No & No \\
Pro-Poor Spending (\% share) & \makecell{11.945** \\ \text{[0.015]}} & \makecell{2.643 \\ \text{[0.173]}} & Yes & Yes \\
Challenger's Entry (share with HS) & \makecell{-0.017 \\ \text{[0.653]}} & \makecell{-0.075 \\ \text{[0.305]}} & No & No \\
Challenger's Entry (share clientelistic) & \makecell{-0.065 \\ \text{[0.990]}} &  \makecell{0.092 \\ \text{[0.252]}} & No & No \\
Challenger is Top 2 (share with HS) & \makecell{-0.025 \\ \text{[0.530]}} &  \makecell{-0.155 \\ \text{[0.124]}} & No & No \\
Challenger is Top 2 (share clientelistic) & \makecell{-0.090 \\ \text{[0.996]}} & \makecell{0.124 \\ \text{[0.197]}} & No & No \\
\bottomrule
\end{tabular}

\vspace{0.5em}
\footnotesize\textit{Note:} P-values are reported in square brackets.
\end{table}

In this case, all but one of the political outcomes retain their significance from \textcite{frey2019cash}. In particular, pro-poor spending exhibits a clear discontinuity at the FHP boundary. Notably, both the local polynomial estimator and the ML-based distance aggregation approach detect this discontinuity, whereas the standard distance-based aggregation estimator does not. A closer examination of the local polynomial estimates suggests that pro-poor spending increases at some boundary locations and decreases at others. Below is a figure showing heterogeneity in the treatment effects estimated by the local polynomial regression. 

\begin{figure}[H]
    \centering
    \includegraphics[width=0.8\linewidth]{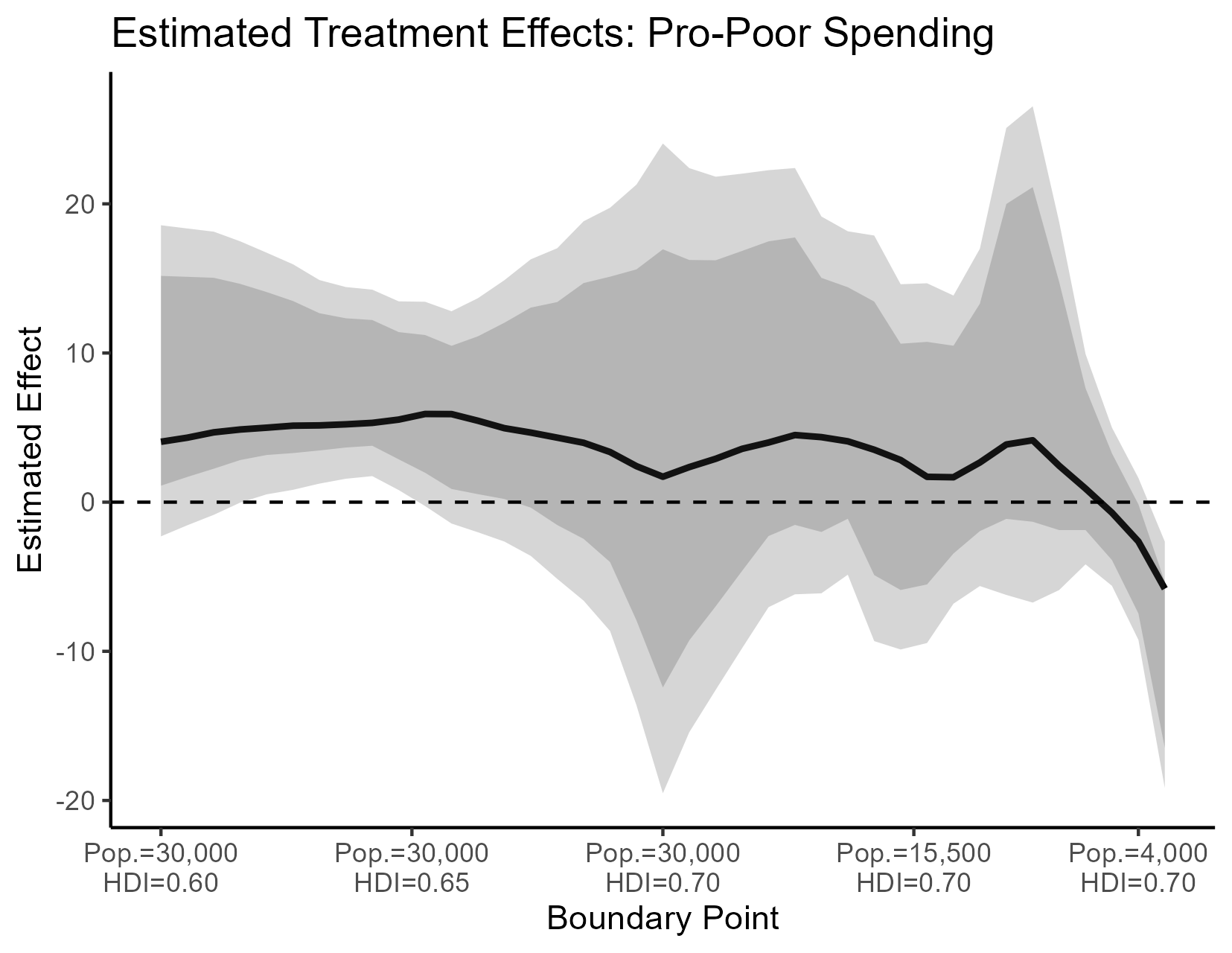}
    \caption{Treatment effects for pro-poor spending outcome using a local polynomial regression. The darker region represents pointwise bands and the lighter region represents uniform bands.}
\end{figure}

The patterns observed in pro-poor spending are intuitive: larger and less-developed municipalities tend to exhibit higher pro-poor expenditures than smaller and more-developed municipalities. Under the standard distance-based aggregation method, the positive and negative discontinuities offset one another, attenuating the aggregated statistic toward zero. In contrast, the aggregation method proposed in this paper avoids this cancellation issue and is therefore able to detect a discontinuity despite the heterogeneous pattern of effects along the boundary. \\

To check for robustness, a density test will be conducted to ensure that municipalities did not manipulate their running variables to receive more funding. More specifically, a density test is conducted for four different subsets of the data, each of which were used in the main analysis. For the first subset, only municipalities in which the candidate was the incumbent mayor were retained. The second subset includes municipalities where the incumbent mayor was eligible to run for re-election. The third subset restricts the sample to municipalities in which the candidate was not the incumbent but the incumbent mayor was running for re-election. Finally, the fourth subset applies the same criteria as the third, further restricting the sample to candidates whose rank was less than three. These tests show no signs of manipulation.
\begin{table}[H]
\centering
\caption{Density test results for different running variable subsets.}
\label{tab:rdd_results_density}
\begin{tabular}{lcccc}
\toprule
\textbf{Variable} & \textbf{Est.}  & \textbf{Est. (Dist.)}   \\
\midrule
Candidate is Incumbent Mayor  & -0.046 & -0.045 \\
& [0.369]  & [0.591] &  \\
Incumbent Mayor Allowed for Re-election &  -0.258  & -0.303  \\
& [0.428] & [0.177] & \\
Not Incumbent Mayor, Running for Re-election  & 0.045   & -0.062  \\
& [0.330] & [0.327] & \\
Not Incumbent Mayor, Running for Re-election, Rank less than 3  &  -0.034  & -0.066  \\
& [0.339] & [0.639] & \\
\bottomrule
\end{tabular}

\vspace{0.5em}
\footnotesize\textit{Note:} P-values are reported in square brackets.
\end{table}

\section{Conclusion}
This paper develops a testing procedure for detecting discontinuities along multidimensional regression discontinuity (RD) boundaries. A central advantage of the proposed approach is its strong performance in relatively small samples, which is a setting where many existing multivariate RD methods struggle due to the curse of dimensionality. The simulation evidence demonstrates that the procedure yields reliable inference even with limited data, and the empirical application illustrates how it can complement and strengthen insights obtained from multivariate RD estimators. \\

Several promising directions for future research remain. One important extension concerns the choice of distance metric used to aggregate local estimates along the boundary. Because the power of the global test depends on how distances are scaled, developing principled data-driven methods for tuning or learning the metric may lead to substantial power gains. Another avenue involves identifying where along the boundary discontinuities occur. Combining the proposed global test with localization techniques could provide a more refined understanding of heterogeneity in treatment effects or sorting, while limiting the disadvantages of standard multivariate RD estimators.

\pagebreak
\printbibliography

\pagebreak

\section{Appendix}
Here, proofs and materials for the main paper are included.

\subsection{Consistency of Random Forest Estimators}

\subsubsection{Proof of Lemma 1}
\begin{proof}
Let $x_0$ be an evaluation point and, without loss of generality, consider estimation from the region $\Omega$. The standard random forest estimator can be written as follows:
$$
\mu_n^{(rf)}(x_0)^+ = \sum_{i=1}^n \alpha_i^+(x_0) Y_i
$$
where $\alpha_i^+(x_0) = 0$ if $X_i \notin \Omega$. Let $\boldsymbol{X}$ be an $n \times d+1$ matrix with $\begin{pmatrix} 1 & (X_i - x_0)' \end{pmatrix}$ in each row, let $A^+ \in \mathbb{R}^{n \times n}$ be a diagonal matrix with $\alpha_i^+(x_0)$ on the diagonal, and let $Y$ be a vector of size $n$ consisting of each $Y_i$. Also, let $J = \text{diag}(0, 1, 1, 1, ...)$ and $\lambda_n \to 0$ (as $n \to \infty$) be the ridge regularization penalty. With that, the local linear forest estimator can be written as follows:
\begin{align*}
\mu_n^{(llf)}(x_0) & = e_1' (\boldsymbol{X}' A^+\boldsymbol{X} + \lambda_n J)^{-1} \boldsymbol{X}' A^+ Y \\
& =  e_1' (\boldsymbol{X}' A^+\boldsymbol{X} + \lambda_n J)^{-1} \sum_{i=1}^n \begin{pmatrix} 1 \\ X_i - x_0 \end{pmatrix}\alpha_i^+(x_0) Y_i
\end{align*}
With that, this estimator can be simplified as follows: 
\begin{align*}
\mu_n^{(llf)}(x_0) & = e_1' (\boldsymbol{X}' A^+\boldsymbol{X} + \lambda_n J)^{-1} \sum_{i=1}^n \begin{pmatrix} 1 \\ X_i - x_0 \end{pmatrix}\alpha_i^+(x_0) Y_i \\
& = e_1' (\boldsymbol{X}' A^+\boldsymbol{X} + \lambda_n J)^{-1} \sum_{i=1}^n \begin{pmatrix} 1 \\ X_i - x_0 \end{pmatrix}\alpha_i^+(x_0) (\mu(X_i) + \epsilon_i)
\end{align*}
If $\mathbbm{1}$ is a vector of all ones, then a Taylor expansion of $\mu(\cdot)$ gives the following:
\begin{align*}
\mu_n^{(llf)}(x_0)^+ & =  e_1' (\boldsymbol{X}' A^+\boldsymbol{X} + \lambda_n J)^{-1} \sum_{i=1}^n \begin{pmatrix} 1 \\ X_i - x_0 \end{pmatrix}\alpha_i^+(x_0) \epsilon_i \\
& + e_1' (\boldsymbol{X}' A^+\boldsymbol{X} + \lambda_n J)^{-1} \sum_{i=1}^n \begin{pmatrix} 1 \\ X_i - x_0 \end{pmatrix}\alpha_i^+(x_0) \mu(x_0)^+ \\
& + e_1' (\boldsymbol{X}' A^+\boldsymbol{X} + \lambda_n J)^{-1} \sum_{i=1}^n \begin{pmatrix} 1 \\ X_i - x_0 \end{pmatrix}\alpha_i^+(x_0) \nabla_x\mu(x_0)'^+ (X_i - x_0) \\
& + O(\bar{R}^2)
\end{align*}
where $\bar{R}^2$ is the average squared radius of the leaves in each tree grown by the random forest. Now, since $\boldsymbol{X}e_1 = \mathbbm{1}$ and $\lambda_n J e_l = 0$, notice the following:
\begin{align*}
(\boldsymbol{X}' A^+ \boldsymbol{X} + \lambda_n J) e_1 & = \boldsymbol{X}' A^+ \mathbbm{1} \\
e_1 & = (\boldsymbol{X}' A^+ \boldsymbol{X} + \lambda_n J)^{-1} \boldsymbol{X}' A^+ \mathbbm{1} \\
e_1' e_1 & = e_1' (\boldsymbol{X}' A^+ \boldsymbol{X} + \lambda_n J)^{-1} \boldsymbol{X}' A^+ \mathbbm{1} \\
& = e_1' (\boldsymbol{X}' A^+ \boldsymbol{X} + \lambda_n J)^{-1} \sum_{i=1}^n \begin{pmatrix} 1 \\ X_i - x_0 \end{pmatrix}\alpha_i^+(x_0)  \\
& = 1
\end{align*}
With that, the estimator can be simplified as follows:
\begin{align*}
\mu_n^{(llf)}(x_0)^+ & = \mu(x_0)^+ \\ & +  e_1' (\boldsymbol{X}' A^+\boldsymbol{X} + \lambda_n J)^{-1} \sum_{i=1}^n \begin{pmatrix} 1 \\ X_i - x_0 \end{pmatrix}\alpha_i^+(x_0) \epsilon_i \\
& + e_1' (\boldsymbol{X}' A^+\boldsymbol{X} + \lambda_n J)^{-1} \sum_{i=1}^n \begin{pmatrix} 1 \\ X_i - x_0 \end{pmatrix}\alpha_i^+(x_0) \nabla_x\mu(x_0)' (X_i - x_0) \\
& + O(\bar{R}^2)
\end{align*}
From here, since $\lambda_n \to 0$, notice the following about the third term.
\begin{align*}
& e_1' (\boldsymbol{X}' A^+\boldsymbol{X} + \lambda_n J)^{-1} \sum_{i=1}^n \begin{pmatrix} 1 \\ X_i - x_0 \end{pmatrix}\alpha_i^+(x_0) \nabla_x\mu(x_0)' (X_i - x_0) \\
& = e_1' (\boldsymbol{X}' A^+\boldsymbol{X} + \lambda_n J)^{-1} \sum_{i=1}^n \begin{pmatrix} 1 \\ X_i - x_0 \end{pmatrix}\alpha_i^+(x_0) \begin{pmatrix} 0 \\ \nabla_x \mu(x_0) \end{pmatrix}' \begin{pmatrix} 1 \\ X_i - x_0 \end{pmatrix} \\
& = e_1' (\boldsymbol{X}' A^+\boldsymbol{X} + \lambda_n J)^{-1} \sum_{i=1}^n \begin{pmatrix} 1 \\ X_i - x_0 \end{pmatrix}\alpha_i^+(x_0) \begin{pmatrix} 1 \\ X_i - x_0 \end{pmatrix} ' \begin{pmatrix} 0 \\ \nabla_x \mu(x_0) \end{pmatrix} \\
& = e_1' (\boldsymbol{X}' A^+\boldsymbol{X} + \lambda_n J)^{-1} \boldsymbol{X}' A^+\boldsymbol{X} \begin{pmatrix} 0 \\ \nabla_x \mu(x_0) \end{pmatrix} \\
& \to 0
\end{align*}
With that, the estimator can be further simplified as follows.
\begin{align*}
\mu_n^{(llf)}(x_0)^+ & = \mu(x_0)^+ \\ 
& +  e_1' (\boldsymbol{X}' A^+\boldsymbol{X} + \lambda_n J)^{-1} \sum_{i=1}^n \begin{pmatrix} 1 \\ X_i - x_0 \end{pmatrix}\alpha_i^+(x_0) \epsilon_i \\
& + O(\bar{R}^2)
\end{align*}
Since $\epsilon_i$ is independent of $\alpha_i(x_0)$ and $X_i$, the expectation of the second term above is equal to zero. Therefore, we have:
$$
 e_1' (\boldsymbol{X}' A^+\boldsymbol{X} + \lambda_n J)^{-1} \sum_{i=1}^n \begin{pmatrix} 1 \\ X_i - x_0 \end{pmatrix}\alpha_i^+(x_0) \epsilon_i \xrightarrow{p} 0
$$
Now, note that the radius of each leaf vanishes according to Lemma 1 of \textcite{masini2025balancing}. Therefore, the local linear forest estimator is consistent at any boundary point $x_0$. \end{proof}

\subsubsection{Proof of Lemma 2}
\begin{proof}
The random forest density estimator is an approach that approximates the density through an averaging of randomly partitioned trees. This approach is similar to a histogram estimator, except where the partitions in the random forest density estimator are made using a more efficient algorithm. \\

Let $T$ be the number of trees and let $d$ be the number of running variables. Also, without loss of generality, let $n$ denote the number of points in $\Omega$ and let $N$ denote the total number of points. With that, let $p_n$ be the depths of the splits.\footnote{In this case, set $p_n = p_N = d(1+d\ln(2))^{-1} \ln(N)$, which appears in Theorem 2 of \textcite{wen2022random}.} The random forest density estimator comes from averaging over many random tree density estimators. To determine the splits used for a given tree, the first step is enclosing the support of $X_i$ into a hyper rectangle $B_r = [-r, r]^d$. Then, a coordinate $l \in \{1, 2, ..., d\}$ is randomly chosen along which the data is split in half. This process repeats for each individual partition of the data until depth $p_n$ is met. \\

Let $A_p^{j,t}$ denote leaf $j$ of tree $t$ with depth $p$ and let $I_p^t$ denote the index of of each partition for tree $t$ and depth $p$. Also, let $\mu(A_p^{j,t})$ be a non-zero Lebesgue measure of each leaf and let $D(A_p^{j,t})$ be defined as follows:
$$
D(A_p^{j,t}) = \frac{1}{n} \sum_{i \in \Omega} \mathbbm{1}(X_i \in A_p^{j,t})
$$
With that, the random tree density estimator is defined as follows:
$$
f_{D, t}^p(x_0 | x \in \Omega) = \sum_{j \in I_p^t} \frac{D(A_p^{j,t})\mathbbm{1}(x_0 \in A_p^{j,t}(x_0))}{\mu(A_p^{j,t})}
$$
The random forest density estimator is then defined as follows:
$$
f_{D, E}(x_0 | x \in \Omega) = \frac{1}{T} \sum_{t=1}^T f_{D, t}^p(x_0)
$$

If $D(\cdot)$ is the empirical probability measure, let $P(\cdot)$ denote the corresponding population probability measure. With that, notice the following:
\begin{align*}
\mathbb{E}[(f_{D, E}(x_0 | x \in \Omega) - f(x_0 | x \in \Omega))^2] & = \mathbb{E}[(f_{D, E}(x_0 | x \in \Omega) - f_{P, E}(x_0 | x \in \Omega))^2] \\
& + \mathbb{E}[(f_{P, E}(x_0 | x \in \Omega) - f(x_0 | x \in \Omega))^2] \\
& + 2 \mathbb{E}[(f_{D, E}(x_0 | x \in \Omega) - f(x_0 | x \in \Omega)) \\
& \cdot (f_{P, E}(x_0 | x \in \Omega) - f(x_0 | x \in \Omega))] \\
& \leq \mathbb{E}[(f_{D, E}(x_0 | x \in \Omega) - f_{P, E}(x_0 | x \in \Omega))^2] \\
&+ \mathbb{E}[(f_{P, E}(x_0 | x \in \Omega) - f(x_0 | x \in \Omega))^2] \\ 
& + 2 \sqrt{\mathbb{E}[(f_{D, E}(x_0 | x \in \Omega) - f_{P, E}(x_0 | x \in \Omega))^2]} \\
& \cdot \sqrt{\mathbb{E}[(f_{P, E}(x_0 | x \in \Omega) - f(x_0 | x \in \Omega))^2]}
\end{align*}
Therefore, if $\mathbb{E}[(f_{D, E}(x_0 | x \in \Omega) - f_{P, E}(x_0 | x \in \Omega))^2]$ and $\mathbb{E}[(f_{P, E}(x_0 | x \in \Omega) - f(x_0 | x \in \Omega))^2]$ converge to zero with the sample size, then the random forest density estimator is consistent for the true density. From the results of \textcite{wen2022random} for $f \in C^{1,\alpha}$:
\begin{align*}
\mathbb{E}[(f_{P, E}(x_0 | x \in \Omega) - f(x_0 | x \in \Omega))^2] & \leq \frac{c^2 (2r)^4 d^2}{T} \exp\left( \frac{-0.75p_n}{d} \right) + 4c^2(2r)^{2d+2} d^2 \exp\left( \frac{-p_n}{d} \right) 
\end{align*}
for some constant $c$.\footnote{Even with an irregular support, these bounds still work because the diameter of each cell is bounded above by the diameter given in the original \textcite{wen2022random} paper.} Since $p_n \to \infty$ as $n \to \infty$, $\mathbb{E}[(f_{P, E}(x_0 | x \in \Omega) - f(x_0 | x \in \Omega))^2] \to 0$. For the other term:
\begin{align*}
\mathbb{E}[(f_{D, E}(x_0 | x \in \Omega) - f_{P, E}(x_0 | x \in \Omega))^2] & \leq \mathbb{E} \left[ \frac{P(A_p^j(x_0))}{n \mu^2(A_p^j(x_0))} \right] \\
& \leq \mathbb{E} \left[ \frac{c_f}{n \mu(A_p^j(x_0))} \right]
\end{align*}
for some constant $c_f$. Note that when $\mu(A_p^j(x_0)) = 0$, $\mathbb{E}[(f_{D, E}(x_0 | x \in \Omega) - f_{P, E}(x_0 | x \in \Omega))^2] = 0$. Let $V_j \in [v_{\text{min}}, 1]$ be the fraction of each cube that is within the support of $X$. By construction, notice the following:
\begin{align*}
n \mu(A_p^j(x_0)) & = \frac{n(2r)^d}{2^p}V_j \\
\ln(n \mu(A_p^j(x_0)) & = d\ln(2r) + \ln(n) - p\ln(2) + \ln(V_j) \\
& = d\ln(2r) + \ln(n) - \frac{d\ln(n)\ln(2)}{1+d\ln(2)} + \ln(V_j) \\
& = d\ln(2r) + \ln(n) \left( 1 -  \frac{d\ln(2)}{1+d\ln(2)}\right)  + \ln(V_j) \\
& \to \infty
\end{align*}
In other words, $n \mu(A_p^j(x_0)) \to \infty$, which means that $\mathbb{E}[(f_{D, E}(x_0 | x \in \Omega) - f_{P, E}(x_0 | x \in \Omega))^2] \to 0$, thus showing that $f_{D, E}(x_0 | x \in \Omega)$ is consistent for $f(x_0 | x \in \Omega)$. Finally, to get the full density from the region $\Omega$, $f(x_0 | x \in \Omega)$ must be multiplied by $\frac{n}{N}$.
\end{proof}

\subsection{Proofs of Theorems}
\subsubsection{Proof of Theorem 1}
\begin{proof}
First, define $Y^{(d)} = Y(1) - Y(0)$. By Assumption 7, since $g_i$ is a deterministic function of $X_i$, $\mathbb{E}[Y_i(t) | g_i = g]$ is continuous for all $g$ and for all $t \in \{0, 1\}$. With that, notice the following:
\begin{align*}
\lim_{\epsilon \rightarrow 0^+} \mathbb{E}[Y_i \mid g_i = \epsilon] - \lim_{\epsilon \rightarrow 0^-} \mathbb{E}[Y_i \mid g_i = \epsilon] 
& = \lim_{\epsilon \rightarrow 0^+} \mathbb{E}[Y_i(1) \mid g_i = \epsilon] - \lim_{\epsilon \rightarrow 0^-} \mathbb{E}[Y_i(0) \mid g_i = \epsilon] \\
& = \mathbb{E}[Y_i(1) - Y_i(0) | g_i = 0] \\
& = \int_y Y^{(d)} f_{Y^{(d)}|g}(Y^{(d)}|g=0) dY^{(d)} \\
& = \frac{1}{f_{g_\Omega}(0)} \int_y Y^{(d)} f_{Y^{(d)}, g}(Y^{(d)}, g=0) dY^{(d)} \\
& = \frac{1}{f_{g_\Omega}(0)} \int_y \int_{X: g_\Omega(X) = 0} Y^{(d)} f_{Y^{(d)}, X}(Y, X) dS_\Omega dY^{(d)} \\
& = \frac{1}{f_{g_\Omega}(0)} \int_{X: g_\Omega(X) = 0} \left( \int_y  Y^{(d)} f_{Y^{(d)}|X}(Y^{(d)}|X) dY^{(d)} \right)  f_X(x) dS_\Omega \\
& = \frac{1}{f_{g_\Omega}(0)} \int_{x : g_\Omega(x) = 0} \tau(x) f_X(x) \, dS_\Omega
\end{align*}
\end{proof}

\subsubsection{Proof of Theorem 2}
\begin{proof}
First, note that the following holds by the definition of conditional density:
\begin{align*}
\frac{\mathbb{P}(\Gamma = 1 | g = 0)}{f_{g, \Gamma}(g=0, \Gamma = 1)} & = \frac{1}{f_{g_\Omega}(0)} \\
& = \frac{\mathbb{P}(\Gamma = 0 | g = 0)}{f_{g, \Gamma}(g=0, \Gamma = 0)}
\end{align*}
With that, notice the following:
\begin{align*}
& \left(\lim_{\epsilon \rightarrow 0^+} \mathbb{E}[Y_i \mid g_i = \epsilon, \Gamma_i = 1] - \lim_{\epsilon \rightarrow 0^-} \mathbb{E}[Y_i \mid g_i = \epsilon, \Gamma_i = 1] \right)\mathbb{E}[\Gamma_i | g_i = 0] -\\ & \left(\lim_{\epsilon \rightarrow 0^+} \mathbb{E}[Y_i \mid g_i = \epsilon, \Gamma_i = 0] - \lim_{\epsilon \rightarrow 0^-} \mathbb{E}[Y_i \mid g_i = \epsilon, \Gamma_i = 0] \right) (1-\mathbb{E}[\Gamma_i | g_i = 0])\\ 
& = \mathbb{E}[Y_i(1) - Y_i(0) \mid g_i = 0, \Gamma_i = 1] \mathbb{E}[\Gamma_i | g_i = 0]- \mathbb{E}[Y_i(1) - Y_i(0) \mid g_i = 0, \Gamma_i = 0](1-\mathbb{E}[\Gamma_i | g_i = 0]) \\
& = \mathbb{P}(\Gamma_i = 1 | g_i = 0) \int_y Y^{(d)} f_{Y^{(d)}| g, \Gamma}(Y^{(d)} |g = 0, \Gamma = 1) dY^{(d)} \\
& - \mathbb{P}(\Gamma_i = 0 | g_i = 0) \int_y Y^{(d)} f_{Y^{(d)}| g, \Gamma}(Y^{(d)} |g = 0, \Gamma = 0) dY^{(d)} \\
& = \frac{\mathbb{P}(\Gamma_i = 1 | g_i = 0)}{f_{g, \Gamma}(g_i=0, \Gamma = 1)} \int_y Y^{(d)} f_{Y^{(d)}, g, \Gamma}(Y^{(d)}, g = 0, \Gamma = 1) dY^{(d)} \\
& - \frac{\mathbb{P}(\Gamma_i = 0 | g_i = 0)}{f_{g, \Gamma}(g_i=0, \Gamma = 0)} \int_y Y^{(d)} f_{Y^{(d)}, g, \Gamma}(Y^{(d)}, g = 0, \Gamma = 0) dY^{(d)} \\
& = \frac{1}{f_{g_\Omega}(0)} \left( \int_y Y^{(d)} f_{Y^{(d)}, g, \Gamma}(Y^{(d)}, g = 0, \Gamma = 1) dY^{(d)} - \int_y Y^{(d)} f_{Y^{(d)}, g, \Gamma}(Y^{(d)}, g = 0, \Gamma = 0) dY^{(d)} \right) \\
& = \frac{1}{f_{g_\Omega}(0)} \bigg( \int_y Y^{(d)} \int_{X: g_\Omega(X) = 0, \Gamma(X) = 1}f_{Y^{(d)}, X}(Y^{(d)}, X) dS_\Omega dY^{(d)} \\
& - \int_y Y^{(d)}  \int_{X: g_\Omega(X) = 0, \Gamma(X) = 0}f_{Y^{(d)}, X}(Y^{(d)}, X) dS_\Omega dY^{(d)} \bigg) \\
& = \frac{1}{f_{g_\Omega}(0)} \Bigg( \int_y \int_{X: g_\Omega(X) = 0, \Gamma(X) = 1} Y^{(d)} f_{Y^{(d)}|X}(Y^{(d)}|X) f_X(x) dS_\Omega dY^{(d)} \\
& - \int_y \int_{X: g_\Omega(X) = 0, \Gamma(X) = 0}  Y^{(d)} f_{Y^{(d)}|X}(Y^{(d)}|X) f_X(x) dS_\Omega dY^{(d)} \Bigg) \\
& = \frac{1}{f_{g_\Omega}(0)} \left( \int_{X: g_\Omega(X) = 0, \Gamma(X) = 1} \tau(x) f_X(x) dS_\Omega - \int_{X: g_\Omega(X) = 0, \Gamma(X) = 0} \tau(x) f_X(x) dS_\Omega \right) \\
& = \frac{1}{f_{g_\Omega}(0)} \int_{x : g_\Omega(x) = 0} |\tau(x)| f_X(x) \, dS_\Omega = \frac{1}{f_{g_\Omega}(0)} \int_{x : g_\Omega(x) = 0} (2\Gamma(X) - 1)\tau(x) f_X(x) \, dS_\Omega
\end{align*}
\end{proof}

\subsubsection{Proof of Theorem 3}
\begin{proof}
Notice the following:
\begin{align*}
& \left[ \lim_{\epsilon \rightarrow 0^+} f_{g_\Omega| \Lambda}(g = \epsilon | \Lambda_i = 1) - \lim_{\epsilon \rightarrow 0^-} f_{g_\Omega | \Lambda}(g = \epsilon | \Lambda_i = 1) \right] \mathbb{P}(\Lambda_i = 1) - \\
& \left[ \lim_{\epsilon \rightarrow 0^+} f_{g_\Omega | \Lambda}(g = \epsilon | \Lambda_i = 0) - \lim_{\epsilon \rightarrow 0^-} f_{g_\Omega | \Lambda}(g = \epsilon | \Lambda_i = 0) \right] \mathbb{P}(\Lambda_i = 0) \\
= & \left[ \lim_{\epsilon \rightarrow 0^+} f_{g_\Omega, \Lambda}(g = \epsilon, \Lambda_i = 1) - \lim_{\epsilon \rightarrow 0^-} f_{g_\Omega, \Lambda}(g = \epsilon, \Lambda_i = 1) \right] - \\
& \left[ \lim_{\epsilon \rightarrow 0^+} f_{g_\Omega, \Lambda}(g = \epsilon, \Lambda_i = 0) - \lim_{\epsilon \rightarrow 0^-} f_{g_\Omega, \Lambda}(g = \epsilon, \Lambda_i = 0) \right] \\
= & \int_{X: g_\Omega(X) = 0, \Lambda(X) = 1} \tau_f(x) dS_\Omega - \int_{X: g_\Omega(X) = 0, \Lambda(X) = 0} \tau_f(x) dS_\Omega \\
= & \int_{X: g_\Omega(X) = 0} |\tau_f(x)| dS_\Omega = \int_{X: g_\Omega(X) = 0} (2\Lambda(X) - 1)\tau_f(x) dS_\Omega 
\end{align*}
\end{proof}

\subsubsection{Proof of Theorem 4}
\begin{proof}
The estimator for each fold is defined as follows:
$$
\hat{\tau}_{\Gamma_{k, p}} = e_1' (\hat{\beta}_{k,p}^+ - \hat{\beta}_{k,p}^-)
$$
To show that the final estimator is consistent and asymptotically normal, it is sufficient to show that each fold-level estimator is consistent and asymptotically normal. Without loss of generality, notice the following about $\hat{\beta}_{k,p}^+$.
\begin{align*}
\hat{\beta}_{k,p}^+ & = (G_p' W_{+,k} G_p)^{-1} G_p' W_{+,k} \hat{Y} \\
& = (G_p' W_{+,k} G_p)^{-1} G_p' W_{+,k} Y^* + (G_p' W_{+,k} G_p)^{-1} G_p' W_{+,k} \eta
\end{align*}
where 
$$
G_p = \begin{pmatrix}
    r_p(g_1)' \\ r_p(g_2)' \\ \vdots
\end{pmatrix}, \quad W_{+,k} = \mathbbm{1}(g_i \geq 0) \mathbbm{1}(i \in I_k) \text{diag}\left(K\left(\frac{g_1}{h_k}\right), K\left(\frac{g_2}{h_k}\right), ... \right), \quad \eta = \begin{pmatrix}
    \hat{\eta}_1 \\ \hat{\eta}_2 \\ \vdots
\end{pmatrix}, \quad Y^* = \begin{pmatrix}
    Y_1^* \\ Y_2^* \\ \vdots
\end{pmatrix}
$$
Notice the following for the first term (according to the results of \textcite{calonico2014robust}):
\begin{align*}
(G_p' W_{+,k} G_p)^{-1} G_p' W_{+,k} Y^* & \xrightarrow{p} \beta^+ \\
\sqrt{n_k h_k}( (G_q' W_{+,k} G_q)^{-1} G_q' W_{+,k} Y^* - \beta^+ ) & \xrightarrow{d} \mathcal{N}(0, V_+)
\end{align*}
where $V_+$ is the asymptotic variance of the local polynomial regression coefficient estimator and
$$
\epsilon = \begin{pmatrix} \epsilon_1 & \epsilon_2, & \dots  \end{pmatrix}', \quad n_k = \sum_{i=1}^n \mathbbm{1}(i \in I_k)
$$
From here, it is sufficient to show that the second term in $\hat{\beta}_{k,p}^+$ is asymptotically negligible for any polynomial order $p \in \{1, 2, ...\}$. With that, by the law of large numbers, $(G_p' W_{+,k} G_p)^{-1} G_p' W_{+,k} \eta = (\Gamma_p^+)^{-1} \frac{1}{n_k h_k} G_p' W_{+,k} \eta + o_p(n_k^{-1}h_k^{-1})$. Now, notice the following:\footnote{$\mathbb{E}[\eta | X]$ can be separated in the second line since sample splitting is used.}
\begin{align*}
\mathbb{E}[G_p' W_{+,k} \eta] & = \mathbb{E}[G_p' W_{+,k} \mathbb{E}[\eta|X]] \\
& = \frac{2}{h_k} \int_{X: g(X) \geq 0} K(g(X)/h) r_p(g(X)/h) \mathbb{E}[\Gamma_n(X) - \Gamma(X) | X] \mathbb{E}[Y(1) | X] f_X(x) dX \\
& = \int_0^1 K(u) r_p(u) \left( \int_{X: g(X) / h = u} m_1(X) dX \right) du \\
& = \int_0^1 K(u) r_p(u) S_1(hu)  du
\end{align*}
where
$$
m_1(X) = 2\mathbb{E}[\Gamma_n(X) - \Gamma(X) | X] \mathbb{E}[Y(1) | X] f_X(x), \quad S_1(z) = \int_{X: g(X) = z} m_1(X) dX 
$$
Using a Taylor expansion of $S_1(hu)$:
\begin{align*}
\mathbb{E}[G_p' W_{+,k} \eta] & = S_1(0) \int_0^1 K(u) r_p(u) du + h_k S_1'(0) \int_0^1 u K(u) r_p(u) du + \frac{h_k^2}{2} S''_1(0) \int_0^1 u^2 K(u) r_p(u) du + O(h_k^2)
\end{align*}
This means:
$$
\mathbb{E}[G_p' W_{-,k} \eta]  = S_0(0) \int_{-1}^0 K(u) r_p(u) du + h_k S_0'(0) \int_{-1}^0 u K(u) r_p(u) du + \frac{h_k^2 S_0''(0)}{2} \int_{-1}^0 u^2 K(u) r_p(u) du + O(h_k^2)
$$
From here, consider the following:
\begin{align*}
& e_1' (\Gamma_p^+)^{-1} \int_0^1 u^j r_p(u) K(u) du  \\
&  e_1' (\Gamma_p^-)^{-1} \int_{-1}^0 u^j r_p(u) K(u) du
\end{align*}
Notice that:
\begin{align*}
\int_0^1 u^j r_p(u) K(u) du  & = \Gamma_p^+ e_{j+1} \\
\int_{-1}^0 u^j r_p(u) K(u) du  & = \Gamma_p^- e_{j+1}
\end{align*}
Therefore, for $j \leq p$:
\begin{align*}
e_1' (\Gamma_p^+)^{-1} \int_0^1 u^j r_p(u) K(u) du & = e_1' (\Gamma_p^+)^{-1} \Gamma_p^+ e_{j+1} \\
e_1' (\Gamma_p^-)^{-1} \int_{-1}^0 u^j r_p(u) K(u) du& = e_1' (\Gamma_p^-)^{-1} \Gamma_p^- e_{j+1}
\end{align*}
This means that the terms above equal 1 if $j=0$ and 0 if $0 < j \leq p$. This implies the following must hold:
\begin{align*}
e_1' (\Gamma_p^+)^{-1} \mathbb{E}[G_p' W_{+,k} \eta] - e_1' (\Gamma_p^-)^{-1} \mathbb{E}[G_p' W_{-,k} \eta] & = S_1(0) - S_0(0) + O(h_k^2)
\end{align*}
From here, the following holds:
$$
S_1(0) - S_0(0) = \int_{X: g(X) = 0} 2\mathbb{E}[Y(1) - Y(0) | X] \mathbb{E}[\Gamma_n(X) - \Gamma(X) | X] f_X(x) dX
$$

When $\mathbb{E}[Y(1) - Y(0) | X] \neq 0$, $\mathbb{E}[\Gamma_n(X) - \Gamma(X) | X] \xrightarrow{p} 0$. More specifically, notice the following:
\begin{align*}
\mathbb{E}[\Gamma_n(X) - \Gamma(X) | X] & = \mathbb{P}(\Gamma_n(X) - \Gamma(X) = 1 | X) - \mathbb{P}(\Gamma_n(X) - \Gamma(X) = -1 | X) 
\end{align*}
Without loss of generality, suppose that $\tau(X) > 0$. Then the following must hold:
\begin{align*}
\mathbb{P}(\Gamma_n(X) - \Gamma(X) = 1 | X) - \mathbb{P}(\Gamma_n(X) - \Gamma(X) = -1 | X) & = \mathbb{P}(\Gamma_n(X) = 2 | X) - \mathbb{P}(\Gamma_n(X) = 0 | X) \\
& = - \mathbb{P}(\Gamma_n(X) = 0 | X) \\
& = - \mathbb{P}(\tau_n(X) < 0 | X) \\
& = - \mathbb{P}(\tau_n(X) - \tau(X) < -\tau(X) | X) \\
& \geq - \mathbb{P}(\tau_n(X) - \tau(X) < -\tau(X) | X) \\
& - \mathbb{P}(\tau_n(X) - \tau(X) > \tau(X) | X) \\
& = -\mathbb{P}(|\tau_n(X) - \tau(X)| > \tau(X) | X) \\
& \geq -\frac{\mathbb{E}[q(|\tau_n(X) - \tau(X)|) | X]}{q(\tau(X))} 
\end{align*}
where the last line holds by the Markov Inequality for some non-decreasing $q(\cdot) > 0$. By assumption 3, 7, and 8, $\mathbb{E}[q(|\tau_n(X) - \tau(X)|) | X]$ exists for some measurable function $q: \mathbb{R}_+ \to \mathbb{R}_+$. Let $a_h \in (0, 1)$. For any bandwidth $h = O(n^{-a_h})$ and $|\tau_n(X) - \tau(X)| = o_p(1)$, there exists a function $q(\cdot)$ such that $\sqrt{n_k h_k} \mathbb{E}[q(|\tau_n(X) - \tau(X)|) | X] \to 0$. Therefore, $\mathbb{E}[\Gamma_n(X) - \Gamma(X) | X]$ is asymptotically negligible for all $x \in \partial \Omega$, regardless of $a_h$. \\

When $\mathbb{E}[Y(1) - Y(0) | X] = 0$, the integrand above equals zero since $\mathbb{E}[\Gamma_n(X) - \Gamma(X) | X]$ is bounded. This means that $S_1(0) - S_0(0) = 0$. Therefore, we can conclude that:
$$
\sqrt{n_k h_k}e_1'( (G_p' W_{+,k} G_p')^{-1} (G_p' W_{+,k} \eta) - (G_p' W_{-,k} G_p')^{-1} (G_p' W_{-,k} \eta) )\xrightarrow{p} 0
$$
This proves that our estimator is consistent and asymptotically normal. 
\end{proof}

\subsubsection{Proof of Theorem 5}
\begin{proof}
To establish consistency and asymptotic normality, it suffices to verify these properties for each individual $\hat{\tau}_{\Lambda_k}$. Note first that the base kernels $K(\cdot)$ and $L(\cdot)$ are symmetric about zero, implying that the quantities $\hat{g}_{i,k}$ entering the kernels may be treated as if they equal $g_{i,k}$ or $g_{i,k}^*$. Moreover, $\hat{g}_{i,k}^j = (g_{i,k}^*)^j$ for any even integer $j$. With that, both the main kernel estimator and the corresponding bias-correction term can be shown to be consistent. \\

For the main kernel estimator, notice the following:
\begin{align*}
\hat{f}_{g^*}(0^+) - \hat{f}_{g^*}(0^-) & = \frac{1}{n_k} \sum_{i\in I_k} \left(\alpha_1 + \alpha_2 \left| \frac{\hat{g}_{i,k}}{h_k} \right| \right) K\left( \frac{\hat{g}_{i,k}}{h_k} \right)\frac{1}{h_k} (2 \mathbbm{1}(\hat{g}_{i,k} > 0) - 1) \\
& = \frac{1}{n_k} \sum_{i \in I_k} \left(\alpha_1 + \alpha_2 \frac{|g_{i,k}|}{h_k} \right) K\left( \frac{g_{i,k}}{h_k} \right)\frac{1}{h_k} (2 \mathbbm{1}(\hat{g}_{i,k} > 0) - 1) \\
\mathbb{E}[\hat{f}_{g^*}(0^+) - \hat{f}_{g^*}(0^-)] & = \mathbb{E}\left[ \left(\alpha_1 + \alpha_2 \frac{|g_{i,k}|}{h_k} \right) K\left( \frac{g_{i,k}}{h_k} \right)\frac{1}{h_k} (2 \mathbbm{1}(\hat{g}_{i,k} > 0) - 1) \right] \\
& = \mathbb{E}\left[ \left(\alpha_1 + \alpha_2 \frac{|g_{i,k}|}{h_k} \right) K\left( \frac{g_{i,k}}{h_k} \right)\frac{1}{h_k} \mathbb{E}[(2 \mathbbm{1}(\hat{g}_{i,k} > 0) - 1) | X_i] \right] \\
& = \frac{1}{h_k} \int_{\text{Supp}(X)} \left(\alpha_1 + \alpha_2 \frac{|g(X)|}{h_k} \right) K\left( \frac{g(X)}{h_k} \right) \mathbb{E}[(2 \mathbbm{1}(\hat{g}(X) > 0) - 1) | X] f_X(x) dX \\
& = \int_0^1 (\alpha_1 + \alpha_2 u)K(u) \left(\int_{X: g(X) = h_k u}  \mathbb{E}[(2 \mathbbm{1}(\hat{g}(X) > 0) - 1) | X] f_X(x) dX \right) du \\
& + \int_{-1}^0 (\alpha_1 - \alpha_2 u)K(u) \left(\int_{X: g(X) = h_k u}  \mathbb{E}[(2 \mathbbm{1}(\hat{g}(X) > 0) - 1) | X] f_X(x) dX \right) du \\
& = \int_0^1 (\alpha_1 + \alpha_2 u)K(u) S_1(h_k u) du + \int_{-1}^0 (\alpha_1 - \alpha_2 u)K(u) S_0(h_k u) du
\end{align*}
where
\begin{align*}
S_1(z) & = \int_{X: g(X) = z}  \mathbb{E}[(2 \mathbbm{1}(\Lambda_n(X) = 1) - 1) | X] f_X^+(x) dX \\
S_0(z) & = \int_{X: g(X) = z}  \mathbb{E}[(2 \mathbbm{1}(\Lambda_n(X) = 0) - 1) | X] f_X^-(x) dX
\end{align*}
which can also be rewritten as:
\begin{align*}
S_1(z) & = \int_{X: g(X) = z}  (2 \mathbb{P}( \Lambda_n(X) = 1 | X) - 1)  f_X^+(x) dX \\
S_0(z) & = \int_{X: g(X) = z} (2 \mathbb{P}( \Lambda_n(X) = 0 | X) - 1) f_X^-(x) dX
\end{align*}

Using Taylor expansions of $S_1(\cdot)$ and $S_0(\cdot)$:
\begin{align*}
\mathbb{E}[\hat{f}_{g^*}(0^+) - \hat{f}_{g^*}(0^-)] & = (S_1(0) + S_0(0)) + \frac{h_k^2}{2} (S_1''(0) + S_0''(0)) \int_0^1 u^2(\alpha_1 + \alpha_2u) K(u) du + O(h_k^2)
\end{align*}
From here, notice the following:
\begin{align*}
S_1(0) + S_0(0) & = \int_{X: g(X) = 0} \bigg[ (2 \mathbb{P}( \Lambda_n(X) = 1 | X) - 1) f_X^+(x) + (2 \mathbb{P}( \Lambda_n(X) = 0 | X) - 1)  f_X^-(x) \bigg] dX
\end{align*}
If $f_X^+(x) = f_X^-(x)$:
\begin{align*}
S_1(0) + S_0(0) & = \int_{X: g(X) = 0} f_X(x) \left[(2 \mathbb{P}( \Lambda_n(X) = 1 | X) - 1) + (2 \mathbb{P}( \Lambda_n(X) = 0 | X) - 1) \right] \\
& = 0
\end{align*}
When $f_X^+(x) \neq f_X^-(x)$, the integrand above is given as follows:
\begin{align*}
2\mathbb{P}(\tau_n(X) \geq 0 | X) f_X^+(x) + 2\mathbb{P}(\tau_n(X) < 0 | X) f_X^-(x) - (f_X^+(x) + f_X^-(x))
\end{align*}
Suppose, without loss of generality, that $\tau_f(x) > 0$. Then the following must hold:
\begin{align*}
\mathbb{P}(\tau_n(X) < 0 | X) & = \mathbb{P}(\tau_n(X) - \tau_f(x) < -\tau_f(x) | X) \\
& \leq \mathbb{P}(\tau_n(X) - \tau_f(x) < -\tau_f(x) | X) + \mathbb{P}(\tau_n(X) - \tau_f(x) > \tau_f(x) | X) \\
& = \mathbb{P}(|\tau_n(X) - \tau_f(x)| > \tau_f(x) | X) \\
& \leq \frac{\mathbb{E}[q(|\tau_n(X) - \tau_f(x)|) | X]}{q(\tau_f(x))}
\end{align*}
where the last step holds by the Markov Inequality for some function $q: \mathbb{R}_+ \to \mathbb{R}_+$ that is measurable and non-decreasing. Since $q(\cdot)$ is arbitrary, this probability converges to zero at an arbitrary rate that can be made faster than any fixed rate $n_k^{-a}$ for fixed $a \in (0, 1)$. Therefore, in this case:
\begin{align*}
2\mathbb{P}(\tau_n(X) \geq 0 | X) f_X^+(x) + 2\mathbb{P}(\tau_n(X) < 0 | X) f_X^-(x) - (f_X^+(x) + f_X^-(x)) & \to 2f_X^+(x) - f_X^+(x) - f_X^-(x) \\
& = f_X^+(x) - f_X^-(x)
\end{align*}
By an analogous argument, if $\tau_f(x) < 0$, then the integrand converges to $-(f_X^+(x) - f_X^-(x))$. With that, the following holds:
$$
S_1(0) + S_0(0) \to \int_{x: g(x) = 0} |\tau_f(x)| dX
$$
Now, for the second term, notice the following:
\begin{align*}
S_1''(0) + S_0''(0) & = \frac{d^2}{dz^2} \int_{x: g(x) = z} 2\mathbb{P}(\tau_n(X) \geq 0 | X) f_X^+(x) + 2\mathbb{P}(\tau_n(X) < 0 | X) f_X^-(x) - (f_X^+(x) + f_X^-(x)) dX \bigg|_{z = 0}
\end{align*}
Using the same arguments from before, the following holds:
$$
S_1''(0) + S_0''(0) \to f_{g^*}''(0^+) - f_{g^*}''(0^-)
$$
To remove this curvature bias term, first notice the following:
\begin{align*}
\hat{f}_{g^*}''(0^+) - \hat{f}_{g^*}''(0^-) & = \frac{1}{n_k h_k^3} \sum_{i \in I_k} L^{(2)}\left(\frac{\hat{g}_{i,k}}{h_k}\right)(2\mathbbm{1}(\hat{g}_{i,k} > 0)-1) \\
& = \frac{1}{n_k h_k^3} \sum_{i \in I_k} L^{(2)}\left(\frac{g_{i,k}}{h_k}\right)(2\mathbbm{1}(\hat{g}_{i,k} > 0)-1) \\
\mathbb{E}[\hat{f}_{g^*}''(0^+) - \hat{f}_{g^*}''(0^-)] & = \frac{1}{h_k^3} \mathbb{E} \left[ L^{(2)}\left(\frac{g_{i,k}}{h_k}\right)\mathbb{E}[(2\mathbbm{1}(\hat{g}_{i,k} > 0)-1) | X_i] \right] \\
& = \frac{1}{h_k^3} \int_{\text{Supp}(X)} L^{(2)}\left(\frac{g_{i,k}}{h_k}\right)\mathbb{E}[(2\mathbbm{1}(\hat{g}(X) > 0)-1) | X] f_X(x) dX \\
& = \frac{1}{h_k^2} \int_{-1}^1 L^{(2)}(u) \left( \int_{X: g(X) = h_k u} \mathbb{E}[(2\mathbbm{1}(\hat{g}(X) > 0)-1) | X] f_X(x) \right) du \\
& = \frac{1}{h_k^2} \int_0^1 L^{(2)}(u)  S_1(h_k u)du + \frac{1}{h_k^2} \int_0^1 L^{(2)}(u)  S_0(h_k u)du
\end{align*}
Using Taylor expansions for $S_1(\cdot)$ and $S_0(\cdot)$ and integration by parts:
$$
\mathbb{E}[\hat{f}_{g^*}''(0^+) - \hat{f}_{g^*}''(0^-)] = S_1''(0) + S_0''(0) + O(h_k)
$$
which shows that the bias correction 
$$
\frac{h_k^2 \mathbb{E}[\hat{f}_{g^*}''(0^+) - \hat{f}_{g^*}''(0^-)]}{2}\int_0^1 \frac{u^2 (\alpha_1 + \alpha_2 u)K(u)}{2!} du = \frac{h_k^2 (S_1''(0) + S_0''(0))}{2}\int_0^1 \frac{u^2 (\alpha_1 + \alpha_2 u)K(u)}{2!} du + O(h_k^3)
$$
is equivalent to the bias of the main estimator. Since the bias-corrected kernel density estimator can be written as the sum of bounded i.i.d random variables, asymptotic normality follows from the Central Limit Theorem.
\end{proof}

\end{document}